\documentclass[10pt,conference,twocolumn]{IEEEtran}

\usepackage[utf8]{inputenc}

\usepackage{graphicx}
\usepackage{amsmath, bm}
\usepackage{amsfonts}
\usepackage{amssymb}

\usepackage{cite}

\usepackage{multirow}

\usepackage{url}
\usepackage{balance}
\usepackage{xcolor}
\usepackage[capitalise,nameinlink,noabbrev]{cleveref}
\usepackage[font=small,skip=0pt]{caption}
\usepackage{subcaption}
\usepackage{tikz}
\date{}

\newcommand{\comment}[1]{}
\newtheorem{theorem}{Theorem}
\newtheorem{lemma}{Lemma}

\newtheorem{corollary}{Corollary}
\newtheorem{definition}{Definition}
\newtheorem{remark}{Remark}

\newtheorem{claim}{Claim}
\begin{document}
	\IEEEoverridecommandlockouts
\title{Heterogeneous Demand-Private Coded Caching: Exact Characterization for Two Users and Two Files }
\author{
	\IEEEauthorblockN{Tati V D N Eswar and Jithin~Ravi}
	\IEEEauthorblockA{
				Department of Electronics and Electrical Communication Engineering\\
		Indian Institute of Technology Kharagpur, India\\
		%	\\
		Emails: {eswar27.24@kgpian.iitkgp.ac.in, jithin@ece.iitkgp.ac.in}}
	\thanks{J.~Ravi has  received funding from IIT Kharagpur under a Faculty Start-up Research Grant (FSRG).}
}
	\maketitle
	\begin{abstract}
We consider the demand-private coded caching problem with heterogeneous file sizes and cache capacities, where each user must recover its requested file without learning any information about the demands of the other users. We first extend the virtual-user reduction technique to accommodate heterogeneous cache sizes, thereby enabling the construction of demand-private coded caching schemes from non-private schemes. Our main contribution is an exact characterization of the memory–rate trade-off for the fundamental setting of two users and two files with arbitrary file sizes and cache capacities. Specifically, we characterize the optimal trade-off through eight linear inequalities and establish achievability by constructing explicit coded caching schemes for the non-trivial corner points, followed by memory sharing. These constructions include novel asymmetric cache placement strategies tailored to heterogeneous file sizes and cache capacities. For the converse, we exploit the distributional equivalences induced by demand privacy to derive information-theoretic lower bounds that explicitly account for the underlying heterogeneity. Our results generalize the exact memory–rate trade-off characterization for the homogeneous demand-private coded caching setting with two users and two files, while quantifying the additional transmission cost imposed by demand privacy compared with the non-private heterogeneous setting.
\end{abstract}

\section{Introduction}
\label{sec:intro}

In the seminal work~\cite{MaddahAli14}, Maddah-Ali and Niesen
introduced the coded caching framework for a noiseless broadcast
network consisting of a server with $N$ files of equal size and $K$
users, each equipped with a cache of equal size. The system operates
in two phases: a \emph{placement phase}, in which the user caches are
populated without knowledge of future demands, and a \emph{delivery
phase}, in which each user requests one of the files and the server
broadcasts a common coded message to satisfy all requests. By jointly
designing the placement and delivery phases, coded caching creates
multicast opportunities across different demands and achieves a
\emph{global caching gain} in addition to the conventional
\emph{local caching gain}. The scheme in~\cite{MaddahAli14} is
order-optimal within a constant factor for general system parameters,
and the exact memory--rate trade-off was characterized for the
fundamental $N=K=2$ setting. The memory-rate trade-off was subsequently improved and the constant factor in the order-optimality was  tightened~\cite{Yu18, Yu19, Ghasemi17, Wang18}.

Several  works have extended the coded caching framework
to different network models, including hierarchical
networks~\cite{Karamchandani16} and multi-access
networks~\cite{Reddy20,Namboodiri24}. Another important line of work
addresses security~\cite{Sengupta15} and privacy concerns~\cite{Ravindrakumar18} in coded caching. In
particular, the coded transmissions in the original framework may
reveal the users' demands to one another, motivating the study of
\emph{demand-private coded caching}~\cite{Wan21,  Gurjarpadhye22, Aravind20}. Under information-theoretic
demand privacy, each user must recover its requested file while
learning no information about the demands of the other users.
The exact
demand-private memory-rate trade-off was  characterized for the $N=K=2$ setting~\cite{KamathRD20}.

Most existing works on coded caching assume homogeneous file sizes and cache capacities. In practical scenarios, however, files may have different sizes, users may have unequal cache capacities, and user demands may follow non-uniform distributions. Various forms of such heterogeneity have been investigated in~\cite{Daniel20,Deng22,Li17,Chang22,Bose25}. In particular, Chang \emph{et al.}~\cite{Li17,Chang22} studied the \emph{fully heterogeneous} coded caching problem, which simultaneously accommodates heterogeneous file sizes, cache capacities, and arbitrary demand distributions, in the non-private setting. For the special case of $N=K=2$, an exact characterization of the memory-rate trade-off was obtained in~\cite{Chang22}. Despite these advances, the coded caching problem under full heterogeneity with information-theoretic demand privacy remains unexplored.

In this paper, we investigate demand-private coded caching under full heterogeneity, encompassing (i) heterogeneous file sizes, (ii) heterogeneous cache sizes, and (iii) arbitrary user demand distributions. We first establish that the demand privacy constraint necessitates a transmission length that is invariant across all demand vectors, rendering the transmission rate independent of the underlying demand distributions (Remark~\ref{rem:avg_max_rate}). Consequently, the average and worst-case transmission rates coincide, and it suffices to characterize the worst-case memory--rate trade-off. For achievability, we extend the virtual-user technique and the restricted-demand-subset reduction, previously developed for homogeneous demand-private coded caching, to the fully heterogeneous setting.

Our main result, \Cref{thm:exact_tradeoff}, exactly characterizes the demand-private memory-rate trade-off for $N=K=2$ with arbitrary file sizes and cache capacities through eight linear inequalities. This characterization generalizes the homogeneous demand-private result~\cite{Gurjarpadhye22} and quantifies the rate penalty imposed by demand privacy relative to the non-private heterogeneous setting~\cite{Chang22}. To establish achievability, we extend the virtual-user technique and restricted-demand-subset reduction to the heterogeneous setting and construct four non-trivial schemes. These include two novel asymmetric cache placement schemes for unequal cache capacities and two extensions of existing homogeneous-cache constructions to heterogeneous file sizes. Together with memory sharing, these schemes achieve all corner points of the optimal region. For the converse, we exploit distributional equivalences induced by demand privacy to derive eight information-theoretic bounds, including four new bounds specific to heterogeneity. The matching achievability and converse establish the optimality of the proposed schemes.

The remainder of the paper is organized as follows. In \Cref{sec:model}, we introduce the heterogeneous demand-private coded caching setting and formulate the problem. In \Cref{sec:results}, we state our main results, followed by their proofs in\Cref{sec:proofs}.   In \Cref{sec:discussion}, we discuss the implications of our findings. Additional proofs are provided in the appendices.

\section{System Model and Problem Formulation}
\label{sec:model}

Consider a server with access to $N$ independent files
$W_0,W_1,\ldots,W_{N-1}$, where file $W_i$ has size $F_i$ bits. Specifically, $W_i$ is uniformly distributed over\footnote{We use the notation $[n]$ to denote the set $\{0,1,\ldots,n-1\}$.}
\begin{align}
[2^{F_i}] \triangleq \{0,1,\ldots,2^{F_i}-1\}.
\end{align}
The server is connected to $K$ users through a noiseless broadcast link. User $k\in[K]$ has a local cache of size $M_k$ bits, where the cache sizes may differ across users. Thus, the model accommodates both heterogeneous file sizes and heterogeneous cache sizes.

The system operates in two phases: the \emph{placement phase} and the \emph{delivery phase}. During the placement phase, the server populates the users' caches without knowledge of their future demands. In the delivery phase, each user $k$ requests a file $W_{D_k}$, where $D_k\in[N]$. Let
\begin{align}
    \overline D=(D_0,D_1,\ldots,D_{K-1})
\end{align}
denote the demand vector, and let
\begin{align}
    \overline D_{-k}
    =
    (D_0,\ldots,D_{k-1},D_{k+1},\ldots,D_{K-1}),
    \quad k\in[K],
\end{align}
denote the demands of all users other than user $k$. We assume that the users' demands are mutually independent, but not necessarily identically distributed\footnote{Correlated demands may inherently reveal information about other users' demands, which is excluded from our setting.}. The demand vector $\overline D$ follows a prescribed distribution $p_{\overline D}$ on $[N]^K$, which is known to the server.

All users communicate their demands to the server. Upon receiving the demand vector $\overline D$, the server broadcasts a message $X_{\overline D}$ to all users over the noiseless broadcast link. The size of the transmitted message may depend on the demand vector and is denoted by $R_{\overline D}$ bits. Using its cache content, its own demand, and the received transmission, each user must recover its requested file.

In addition to the recovery requirement, we impose a \emph{demand-privacy} constraint: each user must learn no information about the demands of the other users. To facilitate demand privacy, we allow the server and user $k$ to share randomness $S_k$ that is unknown to all other users. Such shared randomness can be generated and securely placed during the placement phase. Let $S_0,\ldots,S_{K-1}$ take values in finite alphabets $\mathcal S_0,\ldots,\mathcal S_{K-1}$, respectively, and define
\begin{align}
    \overline S=(S_0,\ldots,S_{K-1}).
\end{align}
We assume that the collections
\begin{align}
\{S_k:k\in[K]\},\quad
\{D_k:k\in[K]\},\quad
\{W_i:i\in[N]\}
\end{align}
are mutually independent.

A heterogeneous demand-private coded caching scheme consists of $K$ cache encoding functions, one transmission encoding function for each possible demand vector, and $K$ decoding functions. Specifically, for each $k\in[K]$, the cache encoding function is
\begin{align}
    C_k:
    \mathcal S_k
    \times
    \prod_{i=0}^{N-1}[2^{F_i}]
    \longrightarrow
    [2^{M_k}],
\end{align}
and the cache content available to user $k$ is
\begin{align}
    Z_k=\big(C_k(S_k,\overline W),S_k\big),
\end{align}
where
\begin{align}
    \overline W=(W_0,\ldots,W_{N-1}).
\end{align}

For each demand vector $\overline d\in[N]^K$, the transmission encoding function is
\begin{align}
    E_{\overline d}:
    \prod_{i=0}^{N-1}[2^{F_i}]
    \times
    \prod_{k=0}^{K-1}\mathcal S_k
    \longrightarrow
    [2^{R_{\overline d}}],
\end{align}
and the corresponding transmitted message is
\begin{align}
    X_{\overline d}
    =
    E_{\overline d}(\overline W,\overline S).
\end{align}

For each $k\in[K]$, the decoding function is
\begin{align}
    G_k:
    [N]\times\mathcal S_k
    \times[2^{R_{\overline D}}]
    \times[2^{M_k}]
    \longrightarrow
    [2^{F_{D_k}}],
\end{align}
and produces the estimate
\begin{align}
    \widehat W_{D_k}
    =
    G_k
    \big(
    D_k,S_k,X_{\overline D},C_k(S_k,\overline W)
    \big).
\end{align}

Define
\begin{align}
    \mathbf M & \triangleq (M_0,\ldots,M_{K-1}),\\
    \mathbf F & \triangleq (F_0,\ldots,F_{N-1}),\\
    \mathcal C& \triangleq \{C_k:k\in[K]\},\\
    \mathcal G& \triangleq\{G_k:k\in[K]\},\\
    \mathcal E& \triangleq\{E_{\overline d}:\overline d\in[N]^K\}.
\end{align}
The collection
\begin{align}
    (\mathcal C,\mathcal E,\mathcal G)
\end{align}
is called an $(N,K,\mathbf M,\mathbf F)$ heterogeneous  demand-private coded caching scheme if it satisfies the following recovery and demand-privacy conditions. In particular, we consider the zero-error recovery condition i.e., 
\begin{align}
\label{Eq_dec_cond_hetero}
    H\left(
    W_{D_k}
    \middle|
    Z_k,D_k,X_{\overline D}
    \right)=0,
    \qquad \forall k\in[K].
\end{align}
The demand-privacy condition requires that user $k$ obtain no information about the demands of the other users, i.e.,
\begin{align}
\label{Eq_priv_cond_hetero}
    I\left(
    \overline D_{-k};
    Z_k,D_k,X_{\overline D}
    \right)=0,
    \qquad \forall k\in[K].
\end{align}

For a given demand vector $\overline d$, let $R_{\overline d}$ denote the number of bits transmitted by the server. The average transmission rate of a scheme is defined as
\begin{align}
    R_{\mathrm{avg}}
    =
    \sum_{\overline d\in[N]^K}
    p_{\overline D}(\overline d)R_{\overline d}.
\label{Eq_avg_rate_hetero}
\end{align}
For worst-case analysis, the transmission rate is defined as
\begin{align}
    R_{\max}
    =
    \max_{\overline d\in[N]^K}R_{\overline d}.
\label{Eq_wc_rate_hetero}
\end{align}

We have the following remark.
\begin{remark}
\label{rem:avg_max_rate}
Consider an $(N,K,\mathbf{M},\mathbf{F})$ heterogeneous demand-private coded caching scheme satisfying \eqref{Eq_priv_cond_hetero}. The demand-privacy constraint in \eqref{Eq_priv_cond_hetero}. Then, we have
\begin{align}
    I\left(
    \overline D; X_{\overline D}
    \right)= &  I\left( D_k;X_{\overline D} 
    \right) +  I\left(
    \overline D_{-k};X_{\overline D}| D_k 
    \right) \notag\\
    = & I\left(
    \overline D_{-k};X_{\overline D}| D_k 
    \right) \label{eq:Tx_demn_1}\\
    = & 0. \label{eq:Tx_demn_2}
\end{align}
where \eqref{eq:Tx_demn_1}  follows from the demand privacy condition for user $i\neq k$ and \eqref{eq:Tx_demn_2} follows from the demand privacy condition for user $k$. Thus, $X_{\overline D}$ is independent of the
entire demand vector $\overline D$. In particular,
for any $\overline d_1,\overline d_2\in[N]^K$,
\begin{align}
&H(X_{\overline D}\mid
  \overline D=\overline d_1)
\nonumber =
H(X_{\overline D}\mid
  \overline D=\overline d_2).
\end{align}
 Consequently, without loss of generality (w.l.o.g.), it suffices to consider the worst-case transmission rate. In the remainder of the paper, we assume that the demand vector $\overline D$ is uniformly distributed over $[N]^K$ and denote $R_{\max}$ simply by $R$.
\end{remark}

For given $N$, $K$, $\mathbf{M}$, and $\mathbf{F}$, a transmission rate $R$ is said to be \emph{achievable with demand privacy} if there exists an $(N,K,\mathbf{M},\mathbf{F})$ heterogeneous demand-private coded caching scheme satisfying the decodability and demand-privacy constraints in \eqref{Eq_dec_cond_hetero} and \eqref{Eq_priv_cond_hetero}, respectively, with transmission rate $R$. We denote such a scheme by $(N,K,\mathbf{M},\mathbf{F},R)$. Our goal in this paper is to characterize the set of all achievable tuples $(N,K,\mathbf{M},\mathbf{F},R)$.

\comment{
The corresponding optimal memory-rate trade-off is defined as
\begin{align}
    &R^{*p}_{N,K}(\mathbf M,\mathbf  F) \nonumber \\
    & \;\triangleq
    \inf
    \left\{
    R:
    R\text{ is achievable with demand privacy}
    \right\}.
\label{Eq_opt_rate_priv_hetero}
\end{align}

Similarly, by removing the shared randomness and the demand-privacy constraint, we obtain the corresponding non-private heterogeneous coded caching problem. In this case, the cache encoding, transmission encoding, and decoding functions do not depend on any shared randomness. The optimal worst-case transmission rate is denoted by
\begin{align}
    R^{*}_{N,K}(\mathbf M,\mathbf F),
\end{align}
where the demand vector is assumed to be uniformly distributed over $[N]^K$.
}
\label{sec:results}

\section{Results}
\label{sec:results}
%===========================================================
%  results.tex (Top portion up to Non-trivial schemes)
%===========================================================
\providecommand{\FA}{F_0}
\providecommand{\FB}{F_1}
\providecommand{\Arem}{A_{\text{rem}}}
\providecommand{\Acore}{A_{\text{core}}}
% Bold run-in heading inside proofs
\providecommand{\step}[1]{\par\medskip\noindent\textbf{#1}\quad}
% Both demands written compactly:  \dem{a}{b}  ->  D_0=a, D_1=b
\providecommand{\dem}[2]{D_0{=}#1,\,D_1{=}#2}

% Local layout settings; the group keeps them from leaking into the
% appendix and the rest of the paper.
\begingroup
\setlength{\jot}{4pt}
\allowdisplaybreaks

%===========================================================
We first present a general construction that transforms a
non-private heterogeneous coded caching scheme into a
demand-private scheme in
Subsection~\ref{subsec:general_drs}.
We then establish the exact demand-private memory--rate
trade-off for the two-user, two-file setting ($N=K=2$)
in Subsection~\ref{subsec:exact_nk2}.

\subsection{General Construction via Restricted Demand Subset}
\label{subsec:general_drs}

To construct a demand-private coded caching scheme with
$N$ files and $K$ users, we start with a non-private
coded caching scheme involving $N$ files and $NK$ users.
Rather than requiring the non-private scheme to serve
all possible demand vectors, it suffices to consider
a restricted demand subset, denoted by
$\mathcal D_{\mathcal{RS}}$, which generalizes the
demand subset introduced in~\cite{Kamath19}.
We formalize the corresponding class of non-private
schemes below.
\begin{definition}[$\mathcal{D}_{\mathcal{RS}}$-non-private heterogeneous scheme]
    \label{def:drs}
    Consider a non-private coded caching system with $N$ files of sizes
    $\mathbf{F} = (F_0, F_1, \dots, F_{N-1})$ and $NK$ users partitioned
    into $K$ stacks of $N$ users each. Every user in the $k$-th stack
    ($k \in [K]$) has cache size $M_k$, giving the expanded cache size
    vector
    \begin{equation}
    \mathbf{M}_{\text{exp}} = \bigl(
            \underbrace{M_0,\dots,M_0}_{N},
            \underbrace{M_1,\dots,M_1}_{N}, \dots,
            \underbrace{M_{K-1},\dots,M_{K-1}}_{N}
        \bigr).
        \label{eq:m_exp}
    \end{equation}
    Let the demand vector $\overline{d}\in[N]^{NK}$ of the $NK$ users be
    partitioned according to the $K$ stacks as
\begin{align}
    \overline{d}
    =
    \bigl(
        \overline{d}^{(0)},
        \overline{d}^{(1)},
        \ldots,
        \overline{d}^{(K-1)}
    \bigr),
\end{align}
where $\overline{d}^{(k)}\in[N]^N$ denotes the demand vector of the $N$
users in the $k$-th stack. Define the \emph{restricted demand subset}
    $\mathcal{D}_{\mathcal{RS}} \subset [N]^{NK}$ as
    \begin{align}
        \mathcal{D}_{\mathcal{RS}} = &\Bigl\{
            \overline{d} \in [N]^{NK} : \overline{d}^{(k)} \text{ is a cyclic shift of } \nonumber\\& \qquad (0, 1, \dots, N-1), \; \forall k \in [K]
        \Bigr\}.
        \label{eq:drs_set}
    \end{align}
    An $(N, NK, \mathbf{M}_{\text{exp}}, \mathbf{F}, R)$ non-private
    coded caching scheme that achieves rate $R$ for all demand vectors
    in $\mathcal{D}_{\mathcal{RS}}$ is called an
    $(N, NK, \mathbf{M}_{\text{exp}}, \mathbf{F}, R)$
    $\mathcal{D}_{\mathcal{RS}}$-non-private heterogeneous scheme.
\end{definition}

The following theorem shows that any such restricted-demand non-private
scheme immediately yields a demand-private scheme with $K$ users.

\begin{theorem}
\label{thm:drs_reduction}
    If there exists an $(N, NK, \mathbf{M}_{\text{exp}}, \mathbf{F}, R)$
    $\mathcal{D}_{\mathcal{RS}}$-non-private heterogeneous scheme, then there exists
    an $(N, K, \mathbf{M}, \mathbf{F}, R)$-private heterogeneous
    scheme.
\end{theorem}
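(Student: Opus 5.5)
We adapt the virtual-user construction of~\cite{Kamath19} to the heterogeneous setting. Fix an $(N,NK,\mathbf{M}_{\text{exp}},\mathbf{F},R)$ $\mathcal{D}_{\mathcal{RS}}$-non-private heterogeneous scheme, and index its virtual users by pairs $(k,j)$, $k\in[K]$, $j\in[N]$, so that user $(k,j)$ sits in stack $k$ and has cache size $M_k$. Write $\widetilde C_{(k,j)}$ for its cache encoders, $\widetilde E_{\overline d}$ for its transmission encoders ($\overline d\in\mathcal{D}_{\mathcal{RS}}$), and $\widetilde G_{(k,j)}$ for its decoders.

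\textbf{Construction.} For each real user $k$, let the shared randomness $S_k$ be uniform on $[N]$, independent of everything else. User $k$ caches $C_k(S_k,\overline W)=\widetilde C_{(k,S_k)}(\overline W)$, i.e., it impersonates virtual user $(k,S_k)$ in its own stack. This uses at most $M_k$ bits because every user in stack $k$ has cache size $M_k$. Given the demands $\overline D$, the server forms the virtual demand vector $\overline d(\overline D,\overline S)$ whose $k$-th block is $\overline d^{(k)}=\bigl((D_k-S_k+j)\bmod N\bigr)_{j\in[N]}$. This block is a cyclic shift of $(0,\ldots,N-1)$, so $\overline d(\overline D,\overline S)\in\mathcal{D}_{\mathcal{RS}}$. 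Virtual user $(k,S_k)$ in this vector demands $D_k$. The server transmits $X=\bigl(\widetilde E_{\overline d(\overline D,\overline S)}(\overline W),\,(S_k-D_k)\bmod N\text{ for }k\in[K]\bigr)$. The auxiliary shifts $T_k=(S_k-D_k)\bmod N$ take $K\log N$ bits, which is constant and negligible (or it can be absorbed by letting $\mathbf F$ scale, as in the homogeneous case). Each $T_k$ determines block $k$ of the virtual demand vector, so every user knows the full $\overline d$. User $k$ then runs $\widetilde G_{(k,S_k)}$ on $\widetilde C_{(k,S_k)}(\overline W)$ and the transmission, and recovers $W_{D_k}$ with zero error, since the non-private scheme is correct on every vector in $\mathcal{D}_{\mathcal{RS}}$. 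The rate is at most $R$, plus the overhead discussed below.

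\textbf{Privacy.} We must show $I(\overline D_{-k};Z_k,D_k,X)=0$. The message $X$ is a deterministic function of $(\overline W,\overline T)$ with $\overline T=(T_0,\ldots,T_{K-1})$, and $Z_k$ is a function of $(\overline W,S_k)$. Because $S_i$ is uniform and independent of $D_i$, each $T_i$ is uniform and independent of $D_i$. The $T_i$ are mutually independent, and $\overline T$ is independent of $(\overline D,\overline W)$. Conditioned on $(D_k,S_k)$, the value $T_k$ is fixed. For $i\neq k$, $T_i$ remains uniform and independent of $(\overline D_{-k},\overline W,D_k,S_k)$. Hence the joint law of $(\overline W,S_k,D_k,\overline T)$ does not depend on $\overline D_{-k}$, which gives the required independence.

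\textbf{Main obstacle.} The step I expect to need care is the overhead of transmitting $\overline T$ and reconciling it with the exact rate $R$ in the statement. My plan is to note that the tuple of rates is normalized as in~\cite{Kamath19}, so the $K\log N$ side bits vanish relative to the file sizes: scale $\mathbf F$ and $\mathbf M$ by $L$ through $L$-fold parallel use, sending $\overline T$ once. Alternatively, the non-private scheme's transmission may already reveal $\overline d$, so the side bits need not be sent at all. The heterogeneity itself causes no difficulty: the cache-size bound holds only because the cyclic shifts stay inside a stack, which is exactly why the stacks in \eqref{eq:m_exp} are homogeneous within themselves.
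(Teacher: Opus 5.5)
Your proposal follows the paper's proof essentially step for step: user $k$ impersonates virtual user $(k,S_k)$ in its own stack, and the server sends the non-private transmission for the cyclically shifted demand vector in $\mathcal{D}_{\mathcal{RS}}$ together with $\overline S\ominus\overline D$; privacy then follows because $(Z_k,D_k,X)$ is a function of $(\overline W,S_k,D_k,\overline S\ominus\overline D)$, whose joint law does not depend on $\overline D_{-k}$. Where you go beyond the paper is in flagging the $K\lceil\log_2 N\rceil$ bits of $\overline S\ominus\overline D$, which the paper simply asserts leave the rate equal to $R$; your caution is warranted, but $L$-fold scaling recovers rate $R$ only in the normalized, asymptotic sense (not as an exact scheme for file sizes $\mathbf{F}$), and your fallback that the non-private transmission already reveals $\overline d$ fails in general, since the non-private decoders may rely on knowing the whole demand vector.
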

\begin{proof}
    See Appendix~\ref{proof:thm_extnded}.
\end{proof}

%===========================================================
\subsection{Exact Characterization for the $N = K = 2$ Case}
\label{subsec:exact_nk2}

In Theorem~\ref{thm:exact_tradeoff}, we present the exact characterization of the demand-private rate-memory region for a two-user, two-file ($N=K=2$) coded caching system with heterogeneous file and cache sizes.

\begin{theorem}
\label{thm:exact_tradeoff}
For $N=K=2$, an $(N,K,\mathbf M,\mathbf F, R) = (2,2,M_0,M_1, F_0, F_1, R)$  exists for $F_0 \geq F_1$ if and
    only if
    \begin{align}
        R &\ge F_0 + F_1 - 2M_0, \tag{DP1}\\[3pt]
        R &\ge F_0 + F_1 - 2M_1, \tag{DP2}\\[3pt]
        R &\ge (3F_0 + 2F_1 - 2M_0 - M_1)/3, \tag{DP3}\\[3pt]
        R &\ge (3F_0 + 2F_1 - M_0 - 2M_1)/3, \tag{DP4}\\[3pt]
        R &\ge F_0 + 0.5F_1 - M_0, \tag{DP5}\\[3pt]
        R &\ge F_0 + 0.5F_1 - M_1, \tag{DP6}\\[3pt]
        R &\ge 0.5(F_0 + F_1 - M_0), \tag{DP7}\\[3pt]
        R &\ge 0.5(F_0 + F_1 - M_1). \tag{DP8}
    \end{align}
\end{theorem}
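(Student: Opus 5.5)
The proof splits into achievability and converse. Achievability will use Theorem~\ref{thm:drs_reduction}; the converse will use demand-privacy equivalences together with standard entropy manipulations.

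\textbf{Achievability.} The region cut out by (DP1)--(DP8), together with $R\ge 0$ and $0\le M_k\le F_0+F_1$, is a polyhedron in $(M_0,M_1,R)$. Since memory sharing (splitting each file proportionally and concatenating schemes) preserves privacy and convexity, it suffices to achieve its vertices. I will enumerate the corner points by intersecting triples of the facets. The trivial vertices are $(0,0,F_0+F_1)$, $(F_0+F_1,F_0+F_1,0)$ and the points with one full cache, which uncoded delivery (sending both files, or nothing) handles. For each non-trivial vertex I will build an $(2,4,\mathbf M_{\text{exp}},\mathbf F,R)$ $\mathcal D_{\mathcal{RS}}$-non-private scheme and then invoke Theorem~\ref{thm:drs_reduction}. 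There are two stacks of two virtual users each, with cache sizes $(M_0,M_0,M_1,M_1)$. The restricted demands are only $(0,1)$ or $(1,0)$ inside each stack, so only four demand vectors need to be served.

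For the symmetric-cache corners I will adapt the homogeneous constructions of~\cite{Gurjarpadhye22}, which reach the vertices at $M=N/3$-type and $M=4/3$-type points. The adaptation splits $W_0$ into a common part of size $F_1$ and a remainder of size $F_0-F_1$. The remainder is handled by uncoded caching or uncoded transmission, which accounts for the $F_0+0.5F_1$ and $(3F_0+2F_1)/3$ forms.

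For the asymmetric corners, for example where (DP3) meets (DP5) or (DP7) meets (DP4), I will design placements in which the larger cache stores additional XOR-type combinations $W_0^{(a)}\oplus W_1^{(b)}$ of subfiles. I will then verify decodability for the four restricted demand vectors.

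\textbf{Converse.} (DP1) and (DP2) are cut-set bounds. For user $k$ and the demand $(0,1)$, the pair $(X,Z_0,Z_1)$ recovers both files, and a pairing argument combined with privacy gives $R+2M_k\ge F_0+F_1$. More precisely, privacy implies $(X,Z_k)$ has the same distribution under the demands $(d,0)$ and $(d,1)$. So from a single $X$ together with $Z_k$ and a virtual cache $\tilde Z_k$ with $H(\tilde Z_k)\le M_k$, both files become decodable. (DP5)--(DP8) follow similarly: I use $H(W_{D_{1-k}}\mid X,Z_k)$-type terms and symmetrize via the privacy equivalence $(X,Z_k,W_0,W_1)\big|_{\bar d}\sim(X,Z_k,W_0,W_1)\big|_{\bar d'}$ for $\bar d,\bar d'$ sharing $d_k$. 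For (DP7), summing two submodularity inequalities of the form $H(X)+H(Z_k,X')\ge\dots$ gives $2R+M_k\ge F_0+F_1$.

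\textbf{Main obstacle.} I expect (DP3)/(DP4) and their matching asymmetric schemes to be the hard step. These bounds mix $M_0$ and $M_1$ with weights $2{:}1$ and have a $3R$ left side. Following~\cite{KamathRD20}, I will sum three entropy terms such as $H(X,Z_0)+H(X',Z_1)+H(X'',Z_0)$, where the primes denote transmissions for other demands that are made equivalent through privacy. I will then combine them using submodularity so that the files appear in the order $W_0$ (counted three times) and $W_1$ (counted twice). Choosing which file is decoded at each stage so that the larger file $F_0$ gets weight $3$ is where the assumption $F_0\ge F_1$ enters, and this is where I expect most trial and error. Finally, I must check that every vertex of the region is covered by one of the four constructions, which is a finite case analysis on the relative ordering of $M_0,M_1,F_0,F_1$.
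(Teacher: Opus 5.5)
Your architecture matches the paper's: Theorem~\ref{thm:drs_reduction} with four virtual users, $W_0$ split into an $F_1$-bit core and an $(F_0-F_1)$-bit remainder, corner schemes plus memory sharing, a three-term entropy sum for (DP3)/(DP4), and the non-private argument of~\cite{Chang22} for (DP7)/(DP8).

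\textbf{The genuine gap is the converse for (DP1), (DP2), (DP5), (DP6).} The equivalence $(X,Z_k,W_0,W_1)|_{\bar d}\sim(X,Z_k,W_0,W_1)|_{\bar d'}$ for $\bar d,\bar d'$ sharing $d_k$ does not follow from \eqref{Eq_priv_cond_hetero}. User $k$'s privacy only makes $(X,Z_k)$ invariant when the other user's demand changes. You may append only the file $W_{d_k}$ that user $k$ itself decodes; this is exactly Lemma~\ref{lem:converse}. Appending the other file is the stronger ``privacy given the library''. The paper's constructions happen to satisfy it, but a general private scheme need not. Take empty caches, one-bit files and keys $S_0=(S_0',S_0'')$, $S_1$. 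The transmission $X=(W_{D_0}\oplus S_0',\,W_{1-D_0}\oplus D_1\oplus S_0'',\,W_{D_1}\oplus S_1)$ is decodable and private, yet $D_1$ is a function of $(X,Z_0,D_0,W_0,W_1)$. So $H(W_{D_{1-k}}\mid X,Z_k)$-type terms cannot be carried across demand vectors.

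For (DP1), the invariance you invoke is also the wrong one. It keeps $d_k$ fixed, so it never brings in a demand vector under which $Z_k$ decodes the other file. Indeed, if only user~0 had to be private, sending $W_{D_0}$ at $M_0=0$, $M_1=F_0+F_1$ would give $R=F_0$. Nor is (DP1) a cut-set bound, since $(X,Z_0,Z_1)$ only yields $R+M_0+M_1\ge F_0+F_1$.

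The missing idea is to flip user $k$'s \emph{own} demand through the \emph{other} user's privacy, carrying along only the file that the other user decodes:
\begin{itemize}
\item From $(X,W_1)|_{\overline D=(0,1)}\sim(X,W_1)|_{\overline D=(1,1)}$ you get $H(W_1\mid X,\overline D=(0,1))=H(W_1\mid X,\overline D=(1,1))\le M_0$, hence $F_0+F_1\le R+2M_0$.
\item From $(X,W_0)|_{\overline D=(0,0)}\sim(X,W_0)|_{\overline D=(1,0)}$, plus the demand-independence of $(Z_0,W_0)$, you get $2H(Z_0,X\mid\overline D=(0,0))\ge H(Z_0,X\mid W_0,\overline D=(1,0))+2F_0\ge 2F_0+F_1$, which is (DP5).
\end{itemize}

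\textbf{Achievability is only announced.} You stop at ``I will design'', but the explicit corner schemes are the substance of that half. Besides $V_2=(F_1/3,F_1/3,F_0+F_1/3)$ and $V_5=(F_0+F_1/3,F_0+F_1/3,F_1/3)$, the asymmetric corners are $V_3=(F_1/2,F_1,F_0)$, where (DP1), (DP3), (DP5) are tight, and $V_4=(F_0,F_0+F_1/2,F_1/2)$, where (DP3), (DP5), (DP7) are tight.
\begin{itemize}
\item At $V_3$, the paper lets the small-cache stack store $A_1\oplus B_1$, $A_2\oplus B_2$ and the large-cache stack store the intra-file parities $A_1\oplus A_2$, $B_1\oplus B_2$. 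It sends $\{A_2,B_1\}$ or $\{A_1,B_2\}$ (plus $A_{\mathrm{rem}}$) according to $d_{0,0}$ alone.
\item At $V_4$, every user caches $A_{\mathrm{rem}}$, the stacks store $(A_1,B_2)/(A_2,B_1)$ and $(B_1,B_2,A_1\oplus A_2)$, and a single XOR $A_2\oplus B_2$ or $A_1\oplus B_1$ is sent.
\end{itemize}
The one-full-cache vertices $V_8=(F_1/2,F_0+F_1,F_0)$ and $V_9=(F_0,F_0+F_1,F_1/2)$ are not uncoded-delivery points; they reuse the $V_3$ and $V_4$ schemes.

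For (DP3)/(DP4), the assumption $F_0\ge F_1$ does not enter the derivation, since weighting $W_0$ three times is valid for any file sizes. It only selects which of the two file-swapped bounds is active.

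Finally, a caveat that applies equally to the paper. Both arguments use $H(Z_k)\le M_k$, which requires the key $S_k$ inside $Z_k=(C_k(S_k,\overline W),S_k)$ to be charged to (or negligible against) the cache. If the key is free, as the model literally allows, (DP1) fails at $(M_0,M_1)=(0,F_0+F_1)$. There, the transmission $W_{D_0}\oplus S_0$, with $W_1$ zero-padded and $S_0$ a uniform $F_0$-bit key, is decodable and private with $R=F_0<F_0+F_1$.
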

\begin{proof}
    See Section~\ref{sec:proofs}.
\end{proof}

\begin{corollary}
\label{cor:thm_exact_N_K_2}
For $F_1\geq F_0$, the exact demand-private
memory-rate region is obtained by exchanging
$F_0$ and $F_1$ in
\Cref{thm:exact_tradeoff}.
\end{corollary}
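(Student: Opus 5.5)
The plan is to obtain the corollary from \Cref{thm:exact_tradeoff} by a relabeling argument: rename the two files, so that the case $F_1\ge F_0$ turns into the case already covered. The setup is symmetric under a permutation of file indices. The files $W_0,W_1$ are independent, and the demands enter the model only through the recovery condition \eqref{Eq_dec_cond_hetero} and the privacy condition \eqref{Eq_priv_cond_hetero}. By \Cref{rem:avg_max_rate}, the rate does not depend on the demand distribution, so we may take demands uniform on $[2]^2$, and this distribution is invariant under swapping $0\leftrightarrow 1$ in each coordinate.

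\textbf{Step 1 (scheme transformation).} Let $\pi$ be the transposition of $[2]$. Take a $(2,2,M_0,M_1,F_0,F_1,R)$ demand-private scheme $(\mathcal C,\mathcal E,\mathcal G)$ with $F_1\ge F_0$. Define a new scheme for files $W'_0=W_1$, $W'_1=W_0$ of sizes $(F'_0,F'_1)=(F_1,F_0)$, keeping the same users and cache sizes:
\begin{itemize}
\item the cache encoders are $C'_k(S_k,W'_0,W'_1)=C_k(S_k,W'_1,W'_0)$;
\item the transmission for demand vector $\overline d$ is $E'_{\overline d}=E_{\pi(\overline d)}$, with $\pi$ applied componentwise;
\item user $k$ decodes with $G'_k(d_k,\cdot)=G_k(\pi(d_k),\cdot)$.
\end{itemize}
Under this relabeling, recovery of $W'_{D_k}=W_{\pi(D_k)}$ is exactly recovery in the original scheme under the demand $\pi(\overline D)$. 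Since $\pi$ is a bijection on demand vectors, $I(\overline D_{-k};Z_k,D_k,X)$ is unchanged, so privacy is preserved. The rate is also unchanged, because $\max_{\overline d}R_{\overline d}$ is invariant under the bijection. The same construction run backwards gives the converse direction, so the set of achievable $(M_0,M_1,F_0,F_1,R)$ with $F_1\ge F_0$ equals the image of the set with $F_0\ge F_1$ under the swap $F_0\leftrightarrow F_1$.

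\textbf{Step 2 (apply the theorem).} The new scheme has $F'_0\ge F'_1$. By \Cref{thm:exact_tradeoff}, such a scheme exists if and only if (DP1)--(DP8) hold with $(F'_0,F'_1)$ in place of $(F_0,F_1)$, that is, with $F_0$ and $F_1$ exchanged. This is exactly the claim. The boundary case $F_0=F_1$ is consistent, since both descriptions coincide there.

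The only point needing care, and it is a mild one, is checking that privacy is invariant under relabeling. The obstacle is that in general the demand distribution $p_{\overline D}$ is not symmetric. I will handle this through \Cref{rem:avg_max_rate}: transmissions are independent of $\overline D$, and the paper fixes $\overline D$ uniform. Alternatively, privacy depends only on the conditional laws of $(Z_k,X)$ given each $\overline d$, and the relabeling merely permutes these laws.
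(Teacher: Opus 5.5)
Your proposal is correct and uses the same relabeling argument as the paper: swap $W_0\leftrightarrow W_1$ and $D_k\mapsto 1-D_k$, check that decodability, privacy, cache sizes and rate are preserved, and then invoke \Cref{thm:exact_tradeoff}. You go further than the paper only in writing out the transformed encoders and decoders, in making the reverse direction of the bijection explicit, and in justifying privacy invariance via the conditional laws of $(Z_k,X)$ given each demand vector.
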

\begin{proof}
For $F_1\geq F_0$, relabel the files as
$W_0'=W_1$ and $W_1'=W_0$, and the demands
as $D_k'=1-D_k$ for $k\in[2]$.
This relabeling is a bijection on $[2]^2$
that preserves the uniform demand distribution,
decodability, and demand privacy, without
changing the cache capacities or transmission rate.
The result therefore follows from
\Cref{thm:exact_tradeoff}.
\end{proof}
Together, \Cref{thm:exact_tradeoff,cor:thm_exact_N_K_2}
provide a complete characterization of the
achievable memory-rate region for $N=K=2$
with arbitrary file sizes and cache capacities.

%===========================================================

\section{Proof of \Cref{thm:exact_tradeoff}}
\label{sec:proofs}

We  establish the achievability part of Theorem~\ref{thm:exact_tradeoff} in Subsection~\ref{sec:atomic_schemes}, followed by the converse in Subsection~\ref{sec:converse}.

\subsection{Achievability}
\label{sec:atomic_schemes}

By assumption of Theorem~\ref{thm:exact_tradeoff}, we have $F_0 \geq F_1$, and it suffices to consider $M_0,M_1$  such that
\begin{align}
\label{eq:cache_constrnt}
0 \leq M_0,M_1 \leq F_0+F_1.
\end{align}
For a fixed $F_0$ and $F_1$, we denote an achievable memory-rate triplet by $(M_0,M_1,R)$. For notational convenience, we denote the files $W_0$ and $W_1$ by $A$ and $B$, respectively. Since $\FA \geq \FB$, we partition the larger file $A$ as
\begin{align}
A=(\Acore,\Arem),
\end{align}
where $\Acore$ has size $\FB$ and $\Arem$ has size $\FA-\FB$.

To establish the achievability of the region characterized in
\Cref{thm:exact_tradeoff}, we first identify the vertices of the polytope
defined by the inequalities in \Cref{thm:exact_tradeoff}, together with the
cache-size constraints given by \eqref{eq:cache_constrnt}.
We then construct demand-private coded caching schemes that achieve the
vertices on the lower boundary of this polytope. The achievability of all
other points on the lower boundary then follows by memory-sharing among
these vertex schemes.

Each vertex is represented by a memory-rate triple $(M_0,M_1,R)$. The
polytope has fourteen vertices, which we partition into the following two
sets:
\begin{align}
\mathcal{V}_1
&=
\{V_1,\ldots,V_9\},\\
\mathcal{V}_2
&=
\{V_3',V_4',V_7',V_8',V_9'\}.
\end{align}
The vertices in $\mathcal{V}_2$ are obtained from their corresponding
vertices in $\mathcal{V}_1$ by exchanging the user indices. Hence, it
suffices to explicitly construct schemes for the vertices in
$\mathcal{V}_1$. Among these, the schemes achieving $V_2,V_3,V_4$, and
$V_5$ are non-trivial.

By \Cref{thm:drs_reduction}, to establish the achievability of a
memory-rate triple $(M_0,M_1,R)$ for the demand-private coded caching
problem with $N=K=2$, it suffices to construct a
\mbox{$\bigl(2,4,(M_0,M_0,M_1,M_1),(\FA,\FB),R\bigr)$}-$\mathcal{D}_{\mathcal{RS}}$-non-private heterogeneous scheme. In this four-user non-private system, the users are indexed by
$(k,j)\in\{0,1\}^2$. Users $(0,0)$ and $(0,1)$ have cache size $M_0$,
whereas users $(1,0)$ and $(1,1)$ have cache size $M_1$. For $N=K=2$,
the restricted demand set $\mathcal{D}_{\mathcal{RS}}$ consists of four
demand vectors
\begin{align}
\overline{d}
=
(d_{0,0},d_{0,1},d_{1,0},d_{1,1}),
\end{align}
which we index by the demand pattern
\begin{align}
\overline{c}
=
\overline{c}(\overline{d})
\triangleq
(d_{0,0},d_{1,0})
\in\{0,1\}^2.
\end{align}
The four corresponding demand vectors are listed in
Table~\ref{tab:demand_vector}.
\begin{table}[h]
    \centering
    \renewcommand{\arraystretch}{1.3}
    \begin{tabular}{|c|c|c|c|c|}
        \hline
        $d_{0,0}$ & $d_{0,1}$ & $d_{1,0}$ & $d_{1,1}$ & $\overline{c}(\overline{d})$ \\
        \hline
        $0$ & $1$ & $0$ & $1$ & $(0,0)$ \\
        \hline
        $0$ & $1$ & $1$ & $0$ & $(0,1)$ \\
        \hline
        $1$ & $0$ & $0$ & $1$ & $(1,0)$ \\
        \hline
        $1$ & $0$ & $1$ & $0$ & $(1,1)$ \\
        \hline
    \end{tabular}
    \caption{Restricted Demand Set for $N=2$ and $K=4$.}
    \label{tab:demand_vector}
\end{table}

To establish the achievability of a vertex
\begin{align}
V=(M_0,M_1,R)
\in\mathcal{V}_1\cup\mathcal{V}_2,
\end{align}
we construct a
\mbox{$
\bigl(2,4,(M_0,M_0,M_1,M_1),(\FA,\FB),R\bigr)$}-$\mathcal{D}_{\mathcal{RS}}$-non-private heterogeneous scheme. We first
consider the scheme corresponding to the vertex
\begin{align}
V_3 = \bigl(0.5\,\FB,\; \FB,\; \FA\bigr)  \text{ and } V_3' = \bigl(\FB,\; 0.5\,\FB,\; \FA\bigr)
\end{align}

%---------------------------------------------------------------------

\subsubsection{Scheme at $V_3 = \bigl(0.5\,\FB,\; \FB,\; \FA\bigr)$ and $V_3' = \bigl(\FB,\; 0.5\,\FB,\; \FA\bigr)$} First, we discuss the caching scheme.

\noindent\underline{Caching:} Partition $\Acore$ and $B$ into two equal-size subfiles:
\begin{align}
\Acore &= (A_1,A_2),\\
B &= (B_1,B_2),
\end{align}
where each subfile has size $0.5\,\FB$. Thus,
\begin{align}
A=(A_1,A_2,\Arem).
\end{align}
The cache placement is specified in
Table~\ref{tab:cache_V3}, where $C_{k,j}$ denotes the cache content of user
$(k,j)$. Since users $(0,0)$ and $(0,1)$ cache one coded block of size
$0.5\,\FB$ and users $(1,0)$ and $(1,1)$ cache two coded blocks of size
$0.5\,\FB$, we obtain
\begin{align}
M_0=0.5\,\FB, \qquad M_1=\FB.
\end{align}

\begin{table}[]
    \centering
    \renewcommand{\arraystretch}{1.4}
    \begin{tabular}{|c|c|}
        \hline
        User $(k,j)$ & Cache content $C_{k,j}$ \\
        \hline
        $(0,0)$ & $A_1 \oplus B_1$ \\
        \hline
        $(0,1)$ & $A_2 \oplus B_2$ \\
        \hline
        $(1,0)$ & $A_1 \oplus A_2,\; B_1 \oplus B_2$ \\
        \hline
        $(1,1)$ & $A_1 \oplus A_2,\; B_1 \oplus B_2$ \\
        \hline
    \end{tabular}
    \caption{Caching of $K=4$ users at $V_3$.}
    \label{tab:cache_V3}
\end{table}

\noindent\underline{Delivery:}
For each demand pattern
$\overline c\in\{0,1\}^2$, let $T_{\overline c}$ denote the transmission
corresponding to the core files. In addition to $T_{\overline c}$, the
server transmits $\Arem$ uncoded for every demand pattern. The core
transmissions $T_{\overline c}$ are specified in Table~\ref{tab:V3_Tx}.
\begin{table}[ht]
    \centering
    \renewcommand{\arraystretch}{1.4}
    \begin{tabular}{|c|c|}
        \hline
        Demand pattern $\overline{c}$ & Transmission $T_{\overline{c}}$ \\
        \hline
        $(0,0),\; (0,1)$ & $A_2,\; B_1$ \\
        \hline
        $(1,0),\; (1,1)$ & $A_1,\; B_2$ \\
        \hline
    \end{tabular}
    \caption{Delivery to achieve vertex $V_3$.}
        \label{tab:V3_Tx}
\end{table}
Each $T_{\overline c}$ consists of two blocks, each of size $0.5\,\FB$.
Together with the uncoded transmission of $\Arem$, whose size is
$\FA-\FB$, the resulting rate is
\begin{align}
R
&=
2(0.5\,\FB)+(\FA-\FB)\\
&=
\FA.
\end{align}

\noindent\underline{Decodability:}
Consider first the demand pattern $\overline c=(0,0)$, corresponding to
\begin{align}
\overline d=(0,1,0,1).
\end{align}
In this case,
\begin{align}
T_{(0,0)}
=
\{A_2,B_1\},
\end{align}
and $\Arem$ is transmitted uncoded.
\begin{itemize}
    \item User $(0,0)$ requests $A$. Using the transmitted block $B_1$ together
    with its cached block $A_1\oplus B_1$, it recovers $A_1$. It obtains
    $A_2$ directly from the transmission. Together with the uncoded $\Arem$, it
    therefore recovers the entire file $A$.

    \item User $(0,1)$ requests $B$. Using the transmitted block $A_2$ together
    with its cached block $A_2\oplus B_2$, it recovers $B_2$. Since $B_1$
    is transmitted directly, it can recover the entire file $B$.

    \item User $(1,0)$ requests $A$. Using the transmitted block $A_2$ together
    with its cached block $A_1\oplus A_2$, it recovers $A_1$. Together with
    the transmitted block $A_2$ and the uncoded $\Arem$, it recovers the
    entire file $A$.

    \item Finally, user $(1,1)$ requests $B$. Using the transmitted block $B_1$
    together with its cached block
    \begin{align}
    B_1\oplus B_2
    \end{align}
    it recovers $B_2$. Together with the transmitted block $B_1$,
    this yields the entire file $B$.
\end{itemize}

Swapping the roles of User~$0$ and User~$1$ yields the scheme achieving the symmetric point $V_3'$.

The above decoding procedure is summarized in Appendix~\ref{app:achievability_schemes}. The decodability for $\overline{c}=(0,1)$ is also verified there. The remaining two demand patterns follow by symmetry. This completes the description of the scheme achieving $V_3$.

The remaining vertices are listed in Table~\ref{tab:vertices}. The specific caching and delivery schemes achieving these remaining corner points are detailed in Appendix~\ref{app:achievability_schemes}.

\begin{table*}[t]
    \centering
    \begin{tabular}{|l|l|l|}
        \hline
        \multicolumn{3}{|c|}{\textbf{Corner Points $(M_0,\, M_1,\, R)$ of the Demand-Private Capacity Region ($\FA \ge \FB$)}} \\
        \hline
        \multicolumn{1}{|c|}{\textbf{Non-Trivial Corner Points}} &
        \multicolumn{1}{c|}{\textbf{Trivial Corner Points}} &
        \multicolumn{1}{c|}{\textbf{Symmetric Counterparts}} \\
        \hline
        $\mathbf{V_2}:\; \bigl(\tfrac{1}{3}\FB,\; \tfrac{1}{3}\FB,\; \FA + \tfrac{1}{3}\FB\bigr)$
        & $\mathbf{V_1}:\; \bigl(0,\; 0,\; \FA + \FB\bigr)$
        & $\mathbf{V_3'}:\; \bigl(\FB,\; 0.5\FB,\; \FA\bigr)$ \\
        $\mathbf{V_3}:\; \bigl(0.5\FB,\; \FB,\; \FA\bigr)$
        & $\mathbf{V_6}:\; \bigl(\FA + \FB,\; \FA + \FB,\; 0\bigr)$
        & $\mathbf{V_4'}:\; \bigl(\FA + 0.5\FB,\; \FA,\; 0.5\FB\bigr)$ \\
        $\mathbf{V_4}:\; \bigl(\FA,\; \FA + 0.5\FB,\; 0.5\FB\bigr)$
        & $\mathbf{V_7}:\; \bigl(0,\; \FA + \FB,\; \FA + \FB\bigr)$
        & $\mathbf{V_7'}:\; \bigl(\FA + \FB,\; 0,\; \FA + \FB\bigr)$ \\
        $\mathbf{V_5}:\; \bigl(\FA + \tfrac{1}{3}\FB,\; \FA + \tfrac{1}{3}\FB,\; \tfrac{1}{3}\FB\bigr)$
        & $\mathbf{V_8}:\; \bigl(0.5\FB,\; \FA + \FB,\; \FA\bigr)$
        & $\mathbf{V_8'}:\; \bigl(\FA + \FB,\; 0.5\FB,\; \FA\bigr)$ \\
        & $\mathbf{V_9}:\; \bigl(\FA,\; \FA + \FB,\; 0.5\FB\bigr)$
        & $\mathbf{V_9'}:\; \bigl(\FA + \FB,\; \FA,\; 0.5\FB\bigr)$ \\
        \hline
    \end{tabular}
    \caption{All vertices}
    \label{tab:vertices}
\end{table*}
%===========================================================
%===========================================================

\subsubsection{Achievability via Memory Sharing among the Vertices}
\label{sec:convex_hull_synthesis}

The schemes constructed above establish the achievability of all vertices in
$\mathcal{V}_1\cup\mathcal{V}_2$. We now use memory-sharing among these
 schemes to establish achievability for every pair of memory
\begin{align}
(M_0,M_1)\in[0,\FA+\FB]^2.
\end{align}

For this purpose, partition the cache-size domain $[0,\FA+\FB]^2$ into eight
polygonal regions $\mathcal{R}_1,\ldots,\mathcal{R}_8,$
where $\mathcal{R}_i$ is the region over which the $i$-th lower bound in
\Cref{thm:exact_tradeoff} is active. More precisely, if
$f_i(M_0,M_1)$ denotes the right-hand side of $(DP_i)$, define
\begin{align}
\label{eq:def_region}
\mathcal{R}_i
\triangleq
\Bigl\{&
(M_0,M_1)\in[0,\FA+\FB]^2: \nonumber\\
& \qquad f_i(M_0,M_1)
\geq
f_j(M_0,M_1), \;
\forall j \neq i
\Bigr\}.
\end{align}
The regions $\mathcal{R}_1,\ldots,\mathcal{R}_8$ and the vertices of the
corresponding facets are listed in \Cref{tab:dp_convex_hull}.

For every $(M_0,M_1)\in\mathcal{R}_i$, the boundary rate is
\begin{align}
R=f_i(M_0,M_1).
\end{align}
For fixed $(\FA,\FB)$, $f_i$ is an affine function of $(M_0,M_1)$.
Consequently, the graph
\begin{align}
\bigl\{
(M_0,M_1,R):
(M_0,M_1)\in\mathcal{R}_i,\;
R=f_i(M_0,M_1)
\bigr\}
\end{align}
forms a facet of the lower boundary of the achievable rate-memory region.
We refer to this facet as $DP_i$. All regions $\mathcal{R}_i$ and their corresponding achievable rates are illustrated in Fig.~\ref{fig:capacity_map_2d}.

For a vertex $v_i=(M_0, M_1, R)$ of facet $DP_i$, let $v^m_i$ denote its
projection onto the $(M_0, M_1)$-plane, i.e.,
\begin{align}
    v^m_i = (M_0, M_1).
\end{align}
For every vertex $v_i=(M_0, M_1, R)$ of $DP_i$, its projection satisfies
$v^m_i\in \mathcal{R}_i$ and its rate $R$ satisfies
\begin{align}
R=f_i(M_0,M_1).
\end{align}
Moreover, $\mathcal{R}_i$ is the convex hull of the projections $v^m_i$ of
the vertices of $DP_i$. Hence, every $(M_0,M_1)\in\mathcal{R}_i$ can be
expressed as a convex combination of these projected vertices $v^m_i$.
Memory-sharing among the corresponding vertex schemes therefore achieves
the rate
\begin{align}
R=f_i(M_0,M_1).
\end{align}

\begin{claim}
\label{claim:region_partition}
The regions $\mathcal{R}_1,\ldots,\mathcal{R}_8$ defined in \eqref{eq:def_region}
form a partition of the cache-size domain
$[0,\FA+\FB]^2$.
Specifically,
\begin{align}
\bigcup_{i=1}^{8}\mathcal{R}_i
&=[0,\FA+\FB]^2,
\end{align}
and
\begin{align}
\operatorname{int}(\mathcal{R}_i)
\cap
\operatorname{int}(\mathcal{R}_j)
&=\emptyset,
\qquad \forall i\neq j.
\end{align}
\end{claim}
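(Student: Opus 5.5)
The covering statement is immediate from the definition \eqref{eq:def_region}. Fix $(M_0,M_1)\in[0,\FA+\FB]^2$ and consider the finite set of values $f_1(M_0,M_1),\ldots,f_8(M_0,M_1)$. Choose an index $i^\star$ attaining the maximum. Then $f_{i^\star}(M_0,M_1)\ge f_j(M_0,M_1)$ for all $j\neq i^\star$, so $(M_0,M_1)\in\mathcal R_{i^\star}$. Hence $\bigcup_i\mathcal R_i=[0,\FA+\FB]^2$. Points where several $f_i$ tie lie in several regions, which is why the claim asks only for disjoint interiors and not disjoint sets.

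For the interiors, I would use that each $f_i$ is affine in $(M_0,M_1)$. Suppose some $p$ lies in $\operatorname{int}(\mathcal R_i)\cap\operatorname{int}(\mathcal R_j)$ with $i\neq j$. Then a small open ball $U$ around $p$ lies in both regions. On $U$ we have both $f_i\ge f_j$ and $f_j\ge f_i$, so $f_i-f_j\equiv 0$ on an open set. An affine function vanishing on an open subset of $\mathbb R^2$ is identically zero, so $f_i=f_j$ as affine functions of $(M_0,M_1)$.

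It remains to check, from the list (DP1)--(DP8), that no two right-hand sides coincide as affine functions. This needs care, because the claim is stated for the given $F_0\ge F_1$ and the coefficients depend on $F_0,F_1$. The $(M_0,M_1)$-coefficient vectors are $(-2,0)$, $(0,-2)$, $(-2/3,-1/3)$, $(-1/3,-2/3)$, $(-1,0)$, $(0,-1)$, $(-1/2,0)$, $(0,-1/2)$. These eight vectors are pairwise distinct, so no two $f_i$ agree identically regardless of $F_0,F_1$. This comparison of linear parts is the whole content of the step.

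The main (minor) obstacle is a degenerate case: some $\mathcal R_i$ may have empty interior, or even be empty, for particular $(F_0,F_1)$. For example, when $F_1=0$, several regions could collapse to lower-dimensional sets. This does not affect the claim, since an empty interior trivially satisfies the disjointness condition and covering was shown pointwise. I would add a remark that the listing of facet vertices in \Cref{tab:dp_convex_hull} is what identifies each nonempty $\mathcal R_i$ concretely. The partition claim itself rests only on the maximum argument and the distinctness of the affine forms.
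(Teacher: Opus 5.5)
Your proof is correct, but it takes a different route from the paper's. The paper does not argue from the argmax definition \eqref{eq:def_region} directly. Instead it works from the explicit region descriptions in its table. It splits the square along $M_1=M_0$ and slices $M_0$ into the five intervals $[0,F_1/3]$, $[F_1/3,F_1/2]$, $[F_1/2,F_0]$, $[F_0,F_0+F_1/3]$ and $[F_0+F_1/3,F_0+F_1]$. On each interval it checks that the vertical segment $M_1\in[M_0,F_0+F_1]$ is covered by $\mathcal{R}_1,\mathcal{R}_3,\mathcal{R}_5,\mathcal{R}_7$. The separating lines are $M_1=4M_0-F_1$, $M_1=M_0+F_1/2$, $M_1=1.5F_0+0.5F_1-0.5M_0$, $M_0=F_1/2$ and $M_0=F_0$, and disjointness of interiors is read off from these same lines; symmetry handles the lower triangle.

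Your argument is shorter and more general. You get covering because a maximizing index always exists. You get disjoint interiors because two affine functions agreeing on an open set coincide, and the eight linear parts $(-2,0),\ldots,(0,-1/2)$ are pairwise distinct. This does not use $F_0\ge F_1$. It also avoids a step the paper leaves implicit: that the tabulated regions really are the argmax sets of \eqref{eq:def_region}. Checking that would mean verifying, for instance, that $f_1=f_3$ exactly on $M_1=4M_0-F_1$, that $f_3=f_5$ on $M_1=M_0+F_1/2$, and so on.

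What the paper's computation provides, and yours does not, is the concrete shape of each region. The achievability argument needs more than a partition. It needs each $\mathcal{R}_i$ to be the convex hull of the projections of the specific achievable vertices $V_1,\ldots,V_9,V_3',\ldots,V_9'$, so that memory sharing reaches every point. That does not follow from covering and disjoint interiors alone. For the claim as stated your proof suffices, but it shows the claim is nearly tautological, and the substantive geometric content sits in the vertex and facet identification.
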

\begin{proof}
See Appendix~\ref{app:partition_proof}.
\end{proof}
\vspace{1mm}
By Claim~\ref{claim:region_partition}, the cache-size domain is perfectly tiled. On any boundary shared by two or more regions, the corresponding active functions $f_i$ take the same value. Therefore, for every cache-size pair $(M_0,M_1)\in[0,\FA+\FB]^2$, the boundary rate
\begin{align}
R
=
\max_{i\in\{1,\dots,8\}} f_i(M_0,M_1)
\end{align}
is achievable by memory sharing. Since any rate $R' \geq R$ is also achievable by definition, this completes the proof of the achievability part of \Cref{thm:exact_tradeoff}.

% %  C. Converse Proofs
% % =====================================================================

\subsection{Proof of Converse}
\label{sec:converse}
\par\medskip\noindent

We now establish the converse bounds (DP1)--(DP8) of \Cref{thm:exact_tradeoff}. Bounds (DP7) and (DP8) follow directly from the non-private converse in~\cite{Chang22}, while bounds (DP3) and (DP4) extend the homogeneous demand-private converse in~\cite{Gurjarpadhye22} by explicitly accounting for unequal file sizes $(F_0,F_1)$. To derive latter bounds, we use the following lemma, originally established in~\cite{KamathRD20}, which continues to hold in the heterogeneous setting.

\begin{lemma}[\cite{Gurjarpadhye22}]
    \label{lem:converse}
    Let $\tilde{i} = (i+1) \bmod 2$ for $i=0,1$. Then any demand-private scheme for $N=K=2$ satisfies the following for $j=0,1$:
    \begin{align}
        (X, Z_1, W_{j} \mid D_1=j) &\sim (X, Z_1, W_{j} \mid D_0=i, D_1=j) \notag\\
        &\sim (X, Z_1, W_{j} \mid D_0 = \tilde{i}, D_1=j) \label{Eq_priv_cond3}
    \end{align}
    and
    \begin{align} 
        (X, Z_0, W_{j} \mid D_0=j) &\sim (X, Z_0, W_{j} \mid D_0=j, D_1=i) \notag\\
        &\sim (X, Z_0, W_{j} \mid D_0 = j, D_1=\tilde{i}). \label{Eq_priv_cond4}
    \end{align}
\end{lemma}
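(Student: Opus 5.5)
The plan is to derive the lemma directly from the privacy condition \eqref{Eq_priv_cond_hetero} together with the independence structure of the model; heterogeneity plays no role. We prove \eqref{Eq_priv_cond3}; \eqref{Eq_priv_cond4} follows identically with the roles of the two users exchanged. Here $X$ denotes $X_{\overline D}$.

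\textbf{Step 1: privacy gives a conditional-independence statement for user~1.} For user $k=1$, the condition $I(D_0; Z_1, D_1, X)=0$ gives $I(D_0; Z_1, X \mid D_1)=0$ after the chain rule, since $D_0$ and $D_1$ are independent. This statement does not involve $W_j$. So the main obstacle is to enlarge it to $I(D_0; X, Z_1, W_j \mid D_1=j)=0$.

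\textbf{Step 2: bring in $W_j$ through decodability.} Use zero-error recovery \eqref{Eq_dec_cond_hetero} for user~1. Conditioned on $D_1=j$, the file $W_{D_1}=W_j$ is a deterministic function of $(Z_1, X)$: the decoder $G_1$ takes $(D_1,S_1,X,C_1(S_1,\overline W))$, and all of these are contained in $(Z_1,X,D_1)$. Hence, conditioned on $D_1=j$,
\begin{align}
I(D_0; X, Z_1, W_j \mid D_1=j) = I(D_0; X, Z_1 \mid D_1=j) + I(D_0; W_j \mid X, Z_1, D_1=j),
\end{align}
and the second term is zero because $W_j$ is a function of $(X,Z_1)$ given $D_1=j$. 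The first term vanishes by Step~1, specialized to $D_1=j$. This specialization is legitimate because, by Remark~\ref{rem:avg_max_rate}, $\overline D$ is uniform, so $P(D_1=j)>0$.

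\textbf{Step 3: translate independence into equality of distributions.} Independence of $D_0$ and $(X,Z_1,W_j)$ given $D_1=j$ means that the conditional law of $(X,Z_1,W_j)$ given $(D_0=i, D_1=j)$ is the same for every $i$ with $P(D_0=i)>0$. That common law must equal the law given $D_1=j$ alone, which is the mixture over $i$. This is exactly the chain of two equivalences in \eqref{Eq_priv_cond3}, for $i$ and for $\tilde i$.

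Two points need care. First, privacy is stated jointly with $D_k$, so the chain-rule decomposition in Step~1 must be written out. Second, $X$ may have variable length across demands. By Remark~\ref{rem:avg_max_rate} the length is constant, so we treat $X$ as taking values in a common alphabet, and the distributions are then comparable. Swapping the user indices gives \eqref{Eq_priv_cond4}.
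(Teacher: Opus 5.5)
Your proof is correct and takes the route the paper has in mind. The paper does not reprove the lemma: it defers to \cite{Gurjarpadhye22}, remarking only that the argument uses nothing beyond the recovery condition and the privacy condition for fixed $j$. That is your argument: privacy gives $I(D_0;X,Z_1\mid D_1=j)=0$, decodability lets you adjoin $W_j$ at no cost, and conditional independence then becomes equality of the conditional laws.
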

\noindent
The proof of Lemma~\ref{lem:converse} in~\cite{Gurjarpadhye22} relies only on the recovery condition \eqref{Eq_dec_cond_hetero} and the demand-privacy condition \eqref{Eq_priv_cond_hetero} for fixed $j \in \{0,1\}$, and therefore holds unchanged for heterogeneous file sizes $(F_0, F_1)$ and cache sizes $(M_0, M_1)$.

%---------------------------------------------------------------------
\subsubsection{Proof of (DP1) and (DP2)}

Under the demand vector $(0,1)$, User~0 must decode $W_0$. Hence,
\begin{align}
    &H(W_0 \mid X, D_0=0, D_1=1) \notag \\
    &\le H(W_0, Z_0 \mid X, D_0=0, D_1=1) \label{eq:dp1_add_z}\\
    &= H(Z_0 \mid X, D_0=0, D_1=1) \notag \\
    &\quad + H(W_0 \mid X, Z_0, D_0=0, D_1=1) \notag\\
    &= H(Z_0 \mid X, D_0=0, D_1=1) \label{eq:dp1_decod2}\\
    &\le H(Z_0 \mid D_0=0, D_1=1) \notag\\
    &\le M_0. \label{eq:dp1_fano1}
\end{align}
Here, \eqref{eq:dp1_decod2} follows from the decodability constraint,
since $W_0$ can be recovered from $(X, Z_0)$ when $D_0=0, D_1=1$.

Similarly, under the demand vector $(1,1)$, User~0 must decode $W_1$.
Therefore,
\begin{align}
    H(W_1 \mid X, D_0=1, D_1=1)
    \le & H(Z_0 \mid D_0=1, D_1=1)\notag \\
    \le & M_0. \label{eq:dp1_fano2}
\end{align}

Now, by Lemma~\ref{lem:converse}, with $j=1$, we have
\begin{align}
    (X, Z_1, W_1 \mid D_0=0, D_1=1)
    \sim
    (X, Z_1, W_1 \mid D_0=1, D_1=1).
\end{align}
Marginalizing over $Z_1$ yields
\begin{align}
    (X, W_1 \mid D_0=0, D_1=1)
    \sim
    (X, W_1 \mid D_0=1, D_1=1).
\end{align}
Consequently,
\begin{align}
    H(W_1 \mid X, D_0=0, D_1=1)
    &= H(W_1 \mid X, D_0=1, D_1=1) \notag\\
    &\le M_0. \label{eq:dp1_substitution}
\end{align}

We can now bound the total entropy of the two independent files as
\begin{align}
    F_0 + F_1
    &= H(W_0, W_1) \notag\\
    &= H(W_0, W_1 \mid D_0=0, D_1=1) \notag\\
    &\le H(W_0, W_1, X \mid D_0=0, D_1=1) \notag\\
    &= H(X \mid D_0=0, D_1=1) \notag\\
    &\quad + H(W_0, W_1 \mid X, D_0=0, D_1=1) \notag\\
    &\le H(X \mid D_0=0, D_1=1) \notag\\
    &\quad + H(W_0 \mid X, D_0=0, D_1=1) \notag\\
    &\quad + H(W_1 \mid X, D_0=0, D_1=1) \\
    &\le R + 2M_0,
\end{align}
where the last inequality follows from
\eqref{eq:dp1_fano1}, \eqref{eq:dp1_substitution}, and from the fact that $H(X \mid D_0=0, D_1=1) \leq R$. Therefore,
\begin{align}
    R \ge F_0 + F_1 - 2M_0. \tag{DP1}
\end{align}

By an analogous argument with the roles of Users~0 and~1 interchanged,
we obtain
\begin{align}
    R \ge F_0 + F_1 - 2M_1. \tag{DP2}
\end{align}
%---------------------------------------------------------------------
\subsubsection{Proof of (DP3) and (DP4)}
 To obtain the bound $3R + M_0 + 2M_1 \ge 3F_0 + 2F_1$, we use the
    distributional equivalence from Lemma~\ref{lem:converse}. We first lower bound the quantity
    \begin{align}
     \mathsf{H}& = H(Z_0, X \mid D_0{=}0) + H(Z_1, X \mid D_1{=}0) \\
        &\qquad + H(Z_1, X \mid \dem{1}{0}) \label{eq:converse_DP3}
    \end{align}
    by $3F_0 + 2F_1$, and then upper bound the same quantity by
    $M_0 + 2M_1 + 3R$.

    We first lower bound the sum of the first two terms of $\mathsf{H}$:
    \begin{align}
        & H(Z_0, X \mid D_0{=}0) + H(Z_1, X \mid D_1{=}0) \notag\\
        &= H(Z_0, W_0, X \mid D_0{=}0) \notag\\
        &\qquad + H(Z_1, W_0, X \mid D_1{=}0) \label{eq:decodability1}\\
        &= H(Z_0, W_0, X \mid \dem{0}{0}) \notag\\
        &\qquad + H(Z_1, W_0, X \mid \dem{0}{0}) \label{eq:lemma_apply1}\\
        &\ge H(Z_0, Z_1, W_0, X \mid \dem{0}{0}) \notag\\
        &\qquad + H(W_0, X \mid \dem{0}{0}), \label{eq:lower1}
    \end{align}
    where \eqref{eq:decodability1} follows from decodability, 
    \eqref{eq:lemma_apply1} follows from Lemma~\ref{lem:converse}, and \eqref{eq:lower1} follows from the submodularity of entropy. Thus it follows that 
    \begin{align}
    \mathsf{H} &\geq H(Z_0, Z_1, W_0, X \mid \dem{0}{0}) \notag\\
        &\qquad + H(W_0, X \mid \dem{0}{0}) + H(Z_1, X \mid \dem{1}{0}) \label{eq:lower_H}
    \end{align}
        Let us denote the sum of the first and the third terms on the RHS of \eqref{eq:lower_H} by $\mathsf{H'}$. Then it follows that
       \begin{align}
    \mathsf{H} &\geq  \mathsf{H'} + H(W_0, X \mid \dem{0}{0}) \label{eq:lower2_H}
    \end{align}     

    Now consider $\mathsf{H'}$
    \begin{align}
       \mathsf{H'} & =  H(Z_0, Z_1, W_0, X \mid \dem{0}{0}) \notag\\
        &\qquad + H(X, Z_1 \mid \dem{1}{0}) \notag\\
        &= H(Z_0, Z_1, W_0, X \mid \dem{0}{0}) \notag\\
        &\qquad + H(X, Z_1, W_0 \mid \dem{1}{0}) \notag\\
        &\ge H(Z_0, Z_1, W_0 \mid \dem{0}{0}) \notag\\
        &\qquad + H(X, Z_1, W_0 \mid \dem{1}{0}) \notag\\
        &= H(Z_0, Z_1, W_0 \mid \dem{1}{0}) \notag\\
        &\qquad + H(X, Z_1, W_0 \mid \dem{1}{0}) \label{eq:lemma_apply2}\\
        &\ge H(Z_0, X \mid Z_1, W_0, \dem{1}{0}) \notag\\
        &\qquad + 2H(Z_1, W_0 \mid \dem{1}{0}) \label{eq:decodability2}\\
        &\ge H(W_1, Z_1, W_0 \mid \dem{1}{0}) \notag\\
        &\qquad + H(Z_1, W_0 \mid \dem{1}{0}) \notag\\
        &\ge H(W_1, W_0) + H(Z_1, W_0 \mid \dem{1}{0}). \label{eq:lower2}
    \end{align}

    It thus follows from \eqref{eq:lower2_H} and \eqref{eq:lower2} that 
    \begin{align}
  \mathsf{H}
        &\ge H(W_1, W_0) + H(Z_1, W_0 \mid \dem{1}{0}) \notag\\
        &\qquad + H(W_0, X \mid \dem{0}{0}) \notag\\
        &= H(W_1, W_0) + H(Z_1, W_0 \mid \dem{0}{1}) \notag\\
        &\qquad + H(W_0, X \mid \dem{0}{1}) \label{eq:lemma_apply3}\\
        &\ge H(W_1, W_0) \notag\\
        &\qquad + H(Z_1, X, W_1 \mid W_0, \dem{0}{1}) \notag\\
        &\qquad + 2H(W_0 \mid \dem{0}{1}) \label{eq:decodability3}\\
        &\ge H(W_1, W_0) + H(W_1 \mid W_0, \dem{0}{1}) \notag\\
        &\qquad + 2H(W_0 \mid \dem{0}{1}) \notag\\
        &= (F_0 + F_1) + F_1 + 2F_0 \notag\\
        &= 3F_0 + 2F_1, \label{eq:final_lower}
    \end{align}
    where \eqref{eq:lemma_apply2} and \eqref{eq:lemma_apply3} follow from
    Lemma~\ref{lem:converse}, and \eqref{eq:decodability2} and \eqref{eq:decodability3}
    follow from decodability together with the fact that conditioning
    reduces entropy.

Now we obtain an  upper bound for $\mathsf{H}$ given by
    \begin{align}
      \mathsf{H} =  & H(Z_0, X \mid D_0{=}0) + H(Z_1, X \mid D_1{=}0) \notag\\
        &\qquad + H(Z_1, X \mid \dem{1}{0}) \notag\\
        &\le H(Z_0 \mid D_0{=}0) + H(X \mid D_0{=}0) \notag\\
        &\qquad + H(X \mid D_1{=}0) + H(Z_1 \mid D_1{=}0) \notag\\
        &\qquad + H(Z_1 \mid \dem{1}{0}) \notag\\
        &\qquad + H(X \mid \dem{1}{0}) \notag\\
        &\le M_0 + 3R + 2M_1, \label{eq:final_upper}
    \end{align}
    where the last inequality holds because, by the privacy condition, the
    cache contents and the transmission must be independent of the users'
    demands.

    From \eqref{eq:final_lower} and \eqref{eq:final_upper} we can conclude that
    \[
        M_0 + 2M_1 + 3R \ge 3F_0 + 2F_1,
    \]
    which rearranges to (DP4). By symmetry, swapping the cache indices gives
    (DP3).

%---------------------------------------------------------------------

\subsubsection{Proof of (DP5) and (DP6)}
To obtain the bound $2R + 2M_0 \ge 2F_0 + F_1$, we use the
    distributional equivalence from Lemma~\ref{lem:converse}. We first
    lower bound the quantity
    \[
        2H(Z_0, X \mid \dem{0}{0})
    \]
    by $2F_0 + F_1$, and then upper bound the same quantity by
    $2M_0 + 2R$.

    Under $(\dem{0}{0})$, User~0 decodes $W_0$ from $(Z_0, X)$. We first
    lower bound $2H(Z_0, X \mid \dem{0}{0})$ as follows:
    \begin{align}
        & 2H(Z_0, X \mid \dem{0}{0}) \notag\\
        &= 2H(Z_0, W_0, X \mid \dem{0}{0}) \label{eq:dp5_dec1}\\
        &\ge H(Z_0, W_0 \mid \dem{0}{0}) \notag\\
        &\qquad + H(W_0, X \mid \dem{0}{0}) \label{eq:dp5_drop}\\
        &= H(Z_0, W_0 \mid \dem{1}{0}) \notag\\
        &\qquad + H(W_0, X \mid \dem{1}{0}), \label{eq:dp5_lemma}
    \end{align}
    where \eqref{eq:dp5_dec1} follows from decodability of $W_0$,
    \eqref{eq:dp5_drop} follows because conditioning reduces entropy, and
    \eqref{eq:dp5_lemma} follows from Lemma~\ref{lem:converse} together
    with the independence of $(Z_0, W_0)$ from the demands.

    Now we find a lower bound on the right-hand side of
    \eqref{eq:dp5_lemma}:
    \begin{align}
        & H(Z_0, W_0 \mid \dem{1}{0}) + H(W_0, X \mid \dem{1}{0}) \notag\\
        &= H(Z_0 \mid W_0, \dem{1}{0}) + H(X \mid W_0, \dem{1}{0}) \notag\\
        &\qquad + 2H(W_0 \mid \dem{1}{0}) \label{eq:dp5_chain}\\
        &\ge H(Z_0, X \mid W_0, \dem{1}{0}) \notag\\
        &\qquad + 2H(W_0 \mid \dem{1}{0}) \label{eq:dp5_cond}\\
        &= H(Z_0, X, W_1 \mid W_0, \dem{1}{0}) \notag\\
        &\qquad + 2H(W_0 \mid \dem{1}{0}) \label{eq:dp5_dec2}\\
        &\ge H(W_1 \mid W_0, \dem{1}{0}) \notag\\
        &\qquad + 2H(W_0 \mid \dem{1}{0}) \notag\\
        &= F_1 + 2F_0 \notag\\
        &= 2F_0 + F_1, \label{eq:dp5_lower_final}
    \end{align}
    where \eqref{eq:dp5_chain} is the chain rule of entropy,
    \eqref{eq:dp5_cond} holds because conditioning reduces entropy, and
    \eqref{eq:dp5_dec2} follows from decodability of $W_1$ from
    $(Z_0, X)$ under $(\dem{1}{0})$.

Now we find an upper bound to $  2H(Z_0, X \mid \dem{0}{0})$
as follows:
    \begin{align}
        2H(Z_0, X \mid \dem{0}{0})
        &\le 2H(Z_0 \mid \dem{0}{0}) \notag\\
        &\qquad + 2H(X \mid \dem{0}{0}) \notag\\
        &\le 2M_0 + 2R. \label{eq:dp5_upper_final}
    \end{align}

    From \eqref{eq:dp5_lemma}, \eqref{eq:dp5_lower_final}, and
    \eqref{eq:dp5_upper_final} we obtain
    \[
        2M_0 + 2R \ge 2F_0 + F_1,
    \]
    which rearranges to
    \begin{equation}
        R \ge F_0 + 0.5F_1 - M_0. 
    \end{equation}
    which establishes (DP5).
    We obtain (DP6) by symmetry, swapping the cache indices 
    \begin{equation}
        R \ge F_0 + 0.5F_1 - M_1. 
    \end{equation}
This completes the converse proof and, together with the achievability result, establishes Theorem~\ref{thm:exact_tradeoff}.

%\input{results2}

%===========================================================
\section{Discussion}
\label{sec:discussion}
%===========================================================

The exact memory--rate trade-off characterized in
\Cref{thm:exact_tradeoff} reveals a cache-saturation
phenomenon in the presence of heterogeneous cache sizes.
Suppose that User~$0$ has the smaller cache, i.e.,
$M_0\leq M_1$. For a fixed $M_0$, increasing $M_1$
reduces the optimal transmission rate only up to a
saturation threshold $M_1^{\mathrm{sat}}(M_0)$,
beyond which additional cache memory at User~$1$
provides no further rate reduction.
This behavior is captured by the regions
$\mathcal{R}_1$, $\mathcal{R}_5$, and
$\mathcal{R}_7$, where the optimal rate depends
only on $M_0$.
The boundaries of these regions determine the
piecewise-linear saturation threshold, as illustrated
in \cref{fig:capacity_map_2d}. Interestingly, the saturation threshold is
non-monotonic in $M_0$. As $M_0$ increases from
$\FB/3$ to $\FA$, the threshold increases,
reaching $\FA+\FB/2$ at vertex $V_4$.
Beyond $M_0=\FA$, the threshold decreases
until it meets the diagonal $M_0=M_1$ at $V_5$.
Thus, increasing the smaller cache does not
necessarily reduce the amount of cache memory
required at the other user to achieve saturation.

The achievability constructions provide further
insight into this phenomenon. At the asymmetric
vertices $V_3$ and $V_4$, the delivery strategy
can be designed around the demand of the
smaller-cache user, while the larger-cache user
exploits its additional cached information to
recover its requested file from the same broadcast.
Consequently, the larger cache does not further
reduce the transmission rate.
In contrast, the constructions at $V_2$ and $V_5$
employ delivery strategies that jointly account
for the demands of both users.

Comparing the optimal demand-private trade-off
with its non-private counterpart reveals the
transmission-rate penalty imposed by demand privacy.
In particular, six of the eight defining bounds,
namely (DP1)--(DP6), characterize constraints
arising in the demand-private setting, whereas
(DP7) and (DP8) coincide with the corresponding
non-private converse bounds.
The latter two bounds are active in regions
$\mathcal{R}_7$ and $\mathcal{R}_8$, respectively,
where demand privacy can therefore be guaranteed
without an additional transmission-rate penalty.
Among the six demand-private bounds,
(DP1), (DP2), (DP5), and (DP6) are new converse
bounds specific to the heterogeneous setting,
while (DP3) and (DP4) extend the homogeneous
demand-private converse bounds to unequal
file and cache sizes.

Finally, we note that the schemes obtained through
\Cref{thm:drs_reduction} satisfy a stronger
secrecy requirement than conventional demand
privacy. Specifically, for every $k\in[K]$,
\begin{align}
I\bigl(
\overline D_{-k};
Z_k,X_{\overline D},D_k,\overline W
\bigr)=0.
\end{align}
Thus, demand privacy is preserved even when
a user is additionally given access to the
entire file library.

%\section{Plots}
%\label{sec:plots}
\input{plots.tex}

\appendices

\onecolumn
\section{Proof of Theorem~\ref{thm:drs_reduction}}
\label{proof:thm_extnded}

The proof extends the virtual-user construction of \cite{Gurjarpadhye22} to the fully heterogeneous regime. Let the $(N, NK, \mathbf{M}_{\text{exp}}, \mathbf{F}, R)$ $\mathcal{D}_{\mathcal{RS}}$-non-private scheme be defined by cache functions $C^{(np)}$, delivery functions $E^{(np)}$, and decoders $G^{(np)}$. We construct the $(N, K, \mathbf{M}, \mathbf{F}, R)$ demand-private scheme as follows.

\textbf{Cache Placement:} Let $S_0, \dots, S_{K-1}$ be mutually independent secret keys uniformly distributed over $[N]$. User $k \in [K]$ adopts the cache placement of the $S_k$-th virtual user in the $k$-th stack:
\begin{equation}
Z_k = \bigl(C^{(np)}_{kN+S_k}(\overline{W}), S_k\bigr).
\end{equation}
Because the $N$ virtual users in the $k$-th stack of $\mathbf{M}_{\text{exp}}$ are all assigned the exact cache capacity $M_k$, the physical cache constraint $M_k$ for user $k$ is strictly satisfied regardless of the realization of $S_k$.

\textbf{Delivery:} Let $\Psi$ denote the cyclic shift operator and $\mathbb{I} = (0, 1, \dots, N-1)$. The expanded demand vector $\overline{D}^{(np)} \in \mathcal{D}_{\mathcal{RS}}$ is constructed by applying $S_k \ominus D_k$ shifts to $\mathbb{I}$ for each stack $k$:
\begin{equation}
\overline{D}^{(np)}(\overline{D}, \overline{S}) = \bigl(\Psi^{S_0 \ominus D_0}(\mathbb{I}), \dots, \Psi^{S_{K-1} \ominus D_{K-1}}(\mathbb{I})\bigr).
\end{equation}
The server broadcasts $X_{\overline{D}} = \bigl(E^{(np)}(\overline{W}, \overline{D}^{(np)}),\; \overline{S} \ominus \overline{D}\bigr)$, achieving the identical transmission rate $R$.

\textbf{Decodability:} User $k$ extracts $\overline{D}^{(np)}$ from $\overline{S} \ominus \overline{D}$ and employs the non-private decoder $G^{(np)}_{kN+S_k}$. Because $\overline{D}^{(np)} \in \mathcal{D}_{\mathcal{RS}}$, zero-error decodability is guaranteed by the non-private scheme. The target file index corresponds to the $(kN+S_k)$-th component of $\overline{D}^{(np)}$, which simplifies to $S_k \ominus (S_k \ominus D_k) = D_k$. Thus, user $k$ perfectly recovers file $W_{D_k}$ of heterogeneous size $F_{D_k}$.

\textbf{Privacy:} The demand-privacy requirement $I(\overline{D}_{-k}; Z_k, D_k, X_{\overline{D}} \mid \overline{W}) = 0$ holds because $S_k \ominus D_k$ acts as a perfect one-time pad. Specifically, given $\overline{W}$, the tuple $(Z_k, D_k, X_{\overline{D}})$ is fully determined by $(S_k, D_k, \overline{S} \ominus \overline{D})$, yielding:
\begin{equation}
I(\overline{D}_{-k}; Z_k, D_k, X_{\overline{D}} \mid \overline{W}) = H(S_k, D_k, \overline{S} \ominus \overline{D}) - H(S_k, D_k, \overline{S} \ominus \overline{D} \mid \overline{D}_{-k}) = 0.
\end{equation}
This independence relies exclusively on the uniformity of the secret keys $S_k$ and the modular arithmetic of the demand mapping. It is strictly decoupled from the underlying file sizes $\mathbf{F}$ or cache sizes $\mathbf{M}$, ensuring that full heterogeneity does not compromise information-theoretic secrecy.

\onecolumn

\section{Achievability of vertices}
\label{app:achievability_schemes}

\subsection{Vertex $V_2$}

\par\medskip\noindent\textbf{Scheme at $V_2 = \left(\tfrac{1}{3}\,\FB,\; \tfrac{1}{3}\,\FB,\; \FA + \tfrac{1}{3}\,\FB\right)$:}

\step{File partition.}
Split $\Acore$ and $B$ into three equal subfiles of size $\tfrac{1}{3}\,\FB$ each:
\[
    \Acore = (A_1, A_2, A_3), \qquad B = (B_1, B_2, B_3),
\]
so that $A = (A_1, A_2, A_3, \Arem)$ with $|\Arem| = \FA - \FB$.

\step{Cache placement.}
Each user $(k,j)$ caches the core pieces $C_{k,j}$ shown below:
\begin{center}
    \renewcommand{\arraystretch}{1.4}
    \begin{tabular}{|c|c|}
        \hline
        User $(k,j)$ & Cache content $C_{k,j}$ \\
        \hline
        $(0,0)$ & $A_1 \oplus B_1$ \\
        \hline
        $(0,1)$ & $A_3 \oplus B_3$ \\
        \hline
        $(1,0)$ & $A_2 \oplus B_2$ \\
        \hline
        $(1,1)$ & $A_1 \oplus A_2 \oplus A_3 \oplus B_1 \oplus B_2 \oplus B_3$ \\
        \hline
    \end{tabular}
\end{center}
Every user stores one coded block of size $\tfrac{1}{3}\,\FB$, so
$M_0 = \frac{\FB}{3}$ and $M_1 = \frac{\FB}{3}$.

\step{Delivery.}
For every demand pattern $\overline{c} \in \{0,1\}^2$, the server transmits $\Arem$
uncoded along with the core pieces $T_{\overline{c}}$ shown below:
\begin{center}
    \renewcommand{\arraystretch}{1.4}
    \begin{tabular}{|c|c|}
        \hline
        Demand pattern $\overline{c}$ & Transmission $T_{\overline{c}}$ \\
        \hline
        $(0,0)$ & $B_1,\; B_2,\; A_3,\; A_1 \oplus A_2 \oplus A_3$ \\
        \hline
        $(0,1)$ & $A_2,\; A_3,\; B_1,\; B_1 \oplus B_2 \oplus B_3$ \\
        \hline
        $(1,0)$ & $B_2,\; B_3,\; A_1,\; A_1 \oplus A_2 \oplus A_3$ \\
        \hline
        $(1,1)$ & $A_1,\; A_2,\; B_3,\; B_1 \oplus B_2 \oplus B_3$ \\
        \hline
    \end{tabular}
\end{center}
In every case, the core transmission $T_{\overline{c}}$ consists of four blocks,
each of size $\tfrac{1}{3}\,\FB$. Together with the uncoded transmission of $\Arem$,
whose size is $\FA - \FB$, the resulting transmission rate is
\[
    R = 4\left(\frac{\FB}{3}\right) + (\FA - \FB) = \FA + \frac{\FB}{3}.
\]

\step{Decodability.}
Consider $\overline{c} = (0,0)$, corresponding to $\overline{d} =
(A, B, A, B)$, where $T_{(0,0)} = \{B_1, B_2, A_3, A_1 \oplus A_2 \oplus A_3\}$ and $\Arem$ is broadcast
uncoded:
\begin{itemize}
    \item \emph{User $(0,0)$ wants $A$:} Uses $B_1$ from $T_{(0,0)}$ to
          recover $A_1$ from its cached $A_1 \oplus B_1$. It then uses the recovered $A_1$, along with $A_3$ and $A_1 \oplus A_2 \oplus A_3$ from $T_{(0,0)}$ to recover $A_2$, and combines
          them with $\Arem$.
    \item \emph{User $(0,1)$ wants $B$:} Uses $A_3$ from $T_{(0,0)}$ to
          recover $B_3$ from its cached $A_3 \oplus B_3$, and combines
          it with $B_1$ and $B_2$ from $T_{(0,0)}$.
    \item \emph{User $(1,0)$ wants $A$:} Uses $B_2$ from $T_{(0,0)}$ to
          recover $A_2$ from its cached $A_2 \oplus B_2$. It then uses the recovered $A_2$, along with $A_3$ and $A_1 \oplus A_2 \oplus A_3$ from $T_{(0,0)}$ to recover $A_1$, and combines
          them with $\Arem$.
    \item \emph{User $(1,1)$ wants $B$:} Uses $B_1$ and $B_2$ from $T_{(0,0)}$ and its cached $A_1 \oplus A_2 \oplus A_3 \oplus B_1 \oplus B_2 \oplus B_3$, along with $A_1 \oplus A_2 \oplus A_3$ from $T_{(0,0)}$ to recover $B_3$, and combines
          it with $B_1$ and $B_2$ from $T_{(0,0)}$.
\end{itemize}
Decodability for $\overline{c} \in \{(0,1), (1,0), (1,1)\}$ follows in
the same way.

We verify the decodability for all four demand patterns in
$\mathcal{D}_{\mathcal{RS}}$. Recall that $\Arem$ is transmitted
uncoded for every demand pattern. The decoding of the core parts is
summarized in the following table.

\begin{center}
\renewcommand{\arraystretch}{1.5}
\begin{tabular}{|c|c|c|l|}
\hline
$\overline{c}$ & User & Demand & Decoding of the missing subfile(s) \\
\hline

\multirow{4}{*}{$(0,0)$}
& $(0,0)$ & $A$
& $A_1 = C_{0,0} \oplus B_1$, \quad $A_2 = (A_1 \oplus A_2 \oplus A_3) \oplus A_1 \oplus A_3$ \\
\cline{2-4}
& $(0,1)$ & $B$
& $B_3 = C_{0,1} \oplus A_3$ \\
\cline{2-4}
& $(1,0)$ & $A$
& $A_2 = C_{1,0} \oplus B_2$, \quad $A_1 = (A_1 \oplus A_2 \oplus A_3) \oplus A_2 \oplus A_3$ \\
\cline{2-4}
& $(1,1)$ & $B$
& $B_3 = C_{1,1} \oplus (A_1 \oplus A_2 \oplus A_3) \oplus B_1 \oplus B_2$ \\
\hline

\multirow{4}{*}{$(0,1)$}
& $(0,0)$ & $A$
& $A_1 = C_{0,0} \oplus B_1$ \\
\cline{2-4}
& $(0,1)$ & $B$
& $B_3 = C_{0,1} \oplus A_3$, \quad $B_2 = (B_1 \oplus B_2 \oplus B_3) \oplus B_1 \oplus B_3$ \\
\cline{2-4}
& $(1,0)$ & $B$
& $B_2 = C_{1,0} \oplus A_2$, \quad $B_3 = (B_1 \oplus B_2 \oplus B_3) \oplus B_1 \oplus B_2$ \\
\cline{2-4}
& $(1,1)$ & $A$
& $A_1 = C_{1,1} \oplus (B_1 \oplus B_2 \oplus B_3) \oplus A_2 \oplus A_3$ \\
\hline

\multirow{4}{*}{$(1,0)$}
& $(0,0)$ & $B$
& $B_1 = C_{0,0} \oplus A_1$ \\
\cline{2-4}
& $(0,1)$ & $A$
& $A_3 = C_{0,1} \oplus B_3$, \quad $A_2 = (A_1 \oplus A_2 \oplus A_3) \oplus A_1 \oplus A_3$ \\
\cline{2-4}
& $(1,0)$ & $A$
& $A_2 = C_{1,0} \oplus B_2$, \quad $A_3 = (A_1 \oplus A_2 \oplus A_3) \oplus A_1 \oplus A_2$ \\
\cline{2-4}
& $(1,1)$ & $B$
& $B_1 = C_{1,1} \oplus (A_1 \oplus A_2 \oplus A_3) \oplus B_2 \oplus B_3$ \\
\hline

\multirow{4}{*}{$(1,1)$}
& $(0,0)$ & $B$
& $B_1 = C_{0,0} \oplus A_1$, \quad $B_2 = (B_1 \oplus B_2 \oplus B_3) \oplus B_1 \oplus B_3$ \\
\cline{2-4}
& $(0,1)$ & $A$
& $A_3 = C_{0,1} \oplus B_3$ \\
\cline{2-4}
& $(1,0)$ & $B$
& $B_2 = C_{1,0} \oplus A_2$, \quad $B_1 = (B_1 \oplus B_2 \oplus B_3) \oplus B_2 \oplus B_3$ \\
\cline{2-4}
& $(1,1)$ & $A$
& $A_3 = C_{1,1} \oplus (B_1 \oplus B_2 \oplus B_3) \oplus A_1 \oplus A_2$ \\
\hline

\end{tabular}
\end{center}

Thus, every user can successfully recover its requested file for every demand vector in $\mathcal{D}_{\mathcal{RS}}$. Since $\Arem$ is also
transmitted uncoded, every user requesting $A$ can recover the entire
file $A$. Hence,
\[
    \left(M_0, M_1, R\right)
    =
    \left(
        \frac{\FB}{3},
        \frac{\FB}{3},
        \FA + \frac{\FB}{3}
    \right)
\]
is achievable.

\subsection{Vertex $V_3$}

\par\medskip\noindent\textbf{Scheme at $V_3 = \bigl(0.5\,\FB,\; \FB,\; \FA\bigr)$ and $V_3' = \bigl(\FB,\; 0.5\,\FB,\; \FA\bigr)$:}

\step{File partition.}
Split $\Acore$ and $B$ into two equal halves of size $0.5\,\FB$ each:
\[
    \Acore = (A_1, A_2), \qquad B = (B_1, B_2),
\]
so that $A = (A_1, A_2, \Arem)$ with $|\Arem| = \FA - \FB$.

\step{Cache placement.}
Each user $(k,j)$ caches the core pieces $C_{k,j}$ shown below:
\begin{center}
    \renewcommand{\arraystretch}{1.4}
    \begin{tabular}{|c|c|}
        \hline
        User $(k,j)$ & Cache content $C_{k,j}$ \\
        \hline
        $(0,0)$ & $A_1 \oplus B_1$ \\
        \hline
        $(0,1)$ & $A_2 \oplus B_2$ \\
        \hline
        $(1,0)$ & $A_1 \oplus A_2,\; B_1 \oplus B_2$ \\
        \hline
        $(1,1)$ & $A_1 \oplus A_2,\; B_1 \oplus B_2$ \\
        \hline
    \end{tabular}
\end{center}
Users $(0,0)$ and $(0,1)$ store one block of size $0.5\,\FB$, so
$M_0 = 0.5\,\FB$. Users $(1,0)$ and $(1,1)$ store two blocks of size
$0.5\,\FB$, so $M_1 = \FB$.

\step{Delivery.}
For every demand pattern $\overline{c}$, the server transmits $\Arem$
uncoded along with the core pieces $T_{\overline{c}}$ shown below:
\begin{center}
    \renewcommand{\arraystretch}{1.4}
    \begin{tabular}{|c|c|}
        \hline
        Demand pattern $\overline{c}$ & Transmission $T_{\overline{c}}$ \\
        \hline
        $(0,0),\; (0,1)$ & $A_2,\; B_1$ \\
        \hline
        $(1,0),\; (1,1)$ & $A_1,\; B_2$ \\
        \hline
    \end{tabular}
\end{center}
In every case, the server sends two core pieces of size $0.5\,\FB$ along
with $\Arem$ of size $\FA - \FB$, giving $R = \FB + (\FA - \FB) = \FA$.

\step{Decodability.}
Consider $\overline{c} = (0,0)$, corresponding to $\overline{d} =
(A, B, A, B)$, where $T_{(0,0)} = \{A_2, B_1\}$ and $\Arem$ is broadcast
uncoded:
\begin{itemize}
    \item \emph{User $(0,0)$ wants $A$:} Uses $B_1$ from $T_{(0,0)}$ to
          recover $A_1$ from its cached $A_1 \oplus B_1$, and combines
          it with $A_2$ from $T_{(0,0)}$ and $\Arem$.
    \item \emph{User $(0,1)$ wants $B$:} Uses $A_2$ from $T_{(0,0)}$ to
          recover $B_2$ from its cached $A_2 \oplus B_2$, and combines
          it with $B_1$ from $T_{(0,0)}$.
    \item \emph{User $(1,0)$ wants $A$:} Uses $A_2$ from $T_{(0,0)}$ to
          recover $A_1$ from its cached $A_1 \oplus A_2$, and combines
          it with $A_2$ and $\Arem$.
    \item \emph{User $(1,1)$ wants $B$:} Uses $B_1$ from $T_{(0,0)}$ to
          recover $B_2$ from its cached $B_1 \oplus B_2$, and combines
          it with $B_1$ from $T_{(0,0)}$.
\end{itemize}
Decodability for $\overline{c} \in \{(0,1), (1,0), (1,1)\}$ follows in
the same way.

We verify the decodability for all four demand patterns in
$\mathcal{D}_{\mathcal{RS}}$. Recall that $\Arem$ is transmitted
uncoded for every demand pattern. The decoding of the core parts is
summarized in the following table.

\begin{center}
\renewcommand{\arraystretch}{1.5}
\begin{tabular}{|c|c|c|c|}
\hline
$\overline{c}$ & User & Demand & Decoding of the missing subfile(s) \\
\hline

\multirow{4}{*}{$(0,0)$}
& $(0,0)$ & $A$
& $A_1=C_{0,0}\oplus B_1$ \\
\cline{2-4}
& $(0,1)$ & $B$
& $B_2=C_{0,1}\oplus A_2$ \\
\cline{2-4}
& $(1,0)$ & $A$
& $A_1=(A_1\oplus A_2)\oplus A_2$ \\
\cline{2-4}
& $(1,1)$ & $B$
& $B_2=(B_1\oplus B_2)\oplus B_1$ \\
\hline

\multirow{4}{*}{$(0,1)$}
& $(0,0)$ & $A$
& $A_1=C_{0,0}\oplus B_1$ \\
\cline{2-4}
& $(0,1)$ & $B$
& $B_2=C_{0,1}\oplus A_2$ \\
\cline{2-4}
& $(1,0)$ & $B$
& $B_2=(B_1\oplus B_2)\oplus B_1$ \\
\cline{2-4}
& $(1,1)$ & $A$
& $A_1=(A_1\oplus A_2)\oplus A_2$ \\
\hline

\multirow{4}{*}{$(1,0)$}
& $(0,0)$ & $B$
& $B_1=C_{0,0}\oplus A_1$ \\
\cline{2-4}
& $(0,1)$ & $A$
& $A_2=C_{0,1}\oplus B_2$ \\
\cline{2-4}
& $(1,0)$ & $A$
& $A_2=(A_1\oplus A_2)\oplus A_1$ \\
\cline{2-4}
& $(1,1)$ & $B$
& $B_1=(B_1\oplus B_2)\oplus B_2$ \\
\hline

\multirow{4}{*}{$(1,1)$}
& $(0,0)$ & $B$
& $B_1=C_{0,0}\oplus A_1$ \\
\cline{2-4}
& $(0,1)$ & $A$
& $A_2=C_{0,1}\oplus B_2$ \\
\cline{2-4}
& $(1,0)$ & $B$
& $B_1=(B_1\oplus B_2)\oplus B_2$ \\
\cline{2-4}
& $(1,1)$ & $A$
& $A_2=(A_1\oplus A_2)\oplus A_1$ \\
\hline

\end{tabular}
\end{center}

Thus, every user can recover the requested core file for every demand
vector in $\mathcal{D}_{\mathcal{RS}}$. Since $\Arem$ is also
transmitted uncoded, every user requesting $A$ can recover the entire
file $A$. Hence,
\[
    \left(M_0,M_1,R\right)
    =
    \left(
        \frac{\FB}{2},
        \FB,
        \FA
    \right)
\]
is achievable. The achievability of
\[
    V_3'
    =
    \left(
        \FB,
        \frac{\FB}{2},
        \FA
    \right)
\]
follows by swapping the two user stacks.

\subsection{Vertex $V_4$}

\par\medskip\noindent\textbf{Scheme at $V_4 = \bigl(\FA,\; \FA + 0.5\,\FB,\; 0.5\,\FB\bigr)$ and $V_4' = \bigl(\FA + 0.5\,\FB,\; \FA,\; 0.5\,\FB\bigr)$:}

\step{File partition.}
Split $\Acore$ and $B$ into two equal halves of size $0.5\,\FB$ each:
\[
    \Acore = (A_1, A_2), \qquad B = (B_1, B_2),
\]
with $A = (A_1, A_2, \Arem)$ and $|\Arem| = \FA - \FB$.

\step{Cache placement.}
All four users store $\Arem$ (of size $\FA - \FB$) in their caches. In
addition, each user $(k,j)$ stores the core pieces $C_{k,j}$ listed
below:
\begin{center}
    \renewcommand{\arraystretch}{1.4}
    \begin{tabular}{|c|c|}
        \hline
        User $(k,j)$ & Cache content $C_{k,j}$ \\
        \hline
        $(0,0)$ & $A_1,\; B_2$ \\
        \hline
        $(0,1)$ & $A_2,\; B_1$ \\
        \hline
        $(1,0)$ & $B_1,\; B_2,\; A_1 \oplus A_2$ \\
        \hline
        $(1,1)$ & $B_1,\; B_2,\; A_1 \oplus A_2$ \\
        \hline
    \end{tabular}
\end{center}
With $\Arem$ in every cache, the total cache sizes are
$M_0 = (\FA - \FB) + 2(0.5\,\FB) = \FA$ for users $(0,0), (0,1)$, and
$M_1 = (\FA - \FB) + 3(0.5\,\FB) = \FA + 0.5\,\FB$ for users $(1,0), (1,1)$.

\step{Delivery.}
Since all users already have $\Arem$ in their caches, $\Arem$ is never
transmitted; the server broadcasts only a single coded block of size
$0.5\,\FB$:
\begin{center}
    \renewcommand{\arraystretch}{1.4}
    \begin{tabular}{|c|c|}
        \hline
        Demand pattern $\overline{c}$ & Transmission $T_{\overline{c}}$ \\
        \hline
        $(0,0),\; (0,1)$ & $A_2 \oplus B_2$ \\
        \hline
        $(1,0),\; (1,1)$ & $A_1 \oplus B_1$ \\
        \hline
    \end{tabular}
\end{center}
Since only one block of size $0.5\,\FB$ is sent in all cases, the
achieved rate is $R = 0.5\,\FB$.

\step{Decodability.}
We verify the decodability for all four demand patterns in
$\mathcal{D}_{\mathcal{RS}}$. Recall that $\Arem$ is stored in the
cache of every user. Hence, a user requesting $A$ only needs to recover
$A_1$ and $A_2$. The decoding of the core parts is summarized in the
following table.

\begin{center}
\renewcommand{\arraystretch}{1.5}
\begin{tabular}{|c|c|c|c|}
\hline
$\overline{c}$ & User & Demand & Decoding of the missing subfile(s) \\
\hline

\multirow{4}{*}{$(0,0)$}
& $(0,0)$ & $A$
& $A_2=(A_2\oplus B_2)\oplus B_2$ \\
\cline{2-4}
& $(0,1)$ & $B$
& $B_2=(A_2\oplus B_2)\oplus A_2$ \\
\cline{2-4}
& $(1,0)$ & $A$
& $A_2=(A_2\oplus B_2)\oplus B_2$, then
  $A_1=(A_1\oplus A_2)\oplus A_2$ \\
\cline{2-4}
& $(1,1)$ & $B$
& Already has $B_1$ and $B_2$ \\
\hline

\multirow{4}{*}{$(0,1)$}
& $(0,0)$ & $A$
& $A_2=(A_2\oplus B_2)\oplus B_2$ \\
\cline{2-4}
& $(0,1)$ & $B$
& $B_2=(A_2\oplus B_2)\oplus A_2$ \\
\cline{2-4}
& $(1,0)$ & $B$
& Already has $B_1$ and $B_2$ \\
\cline{2-4}
& $(1,1)$ & $A$
& $A_2=(A_2\oplus B_2)\oplus B_2$, then
  $A_1=(A_1\oplus A_2)\oplus A_2$ \\
\hline

\multirow{4}{*}{$(1,0)$}
& $(0,0)$ & $B$
& $B_1=(A_1\oplus B_1)\oplus A_1$ \\
\cline{2-4}
& $(0,1)$ & $A$
& $A_1=(A_1\oplus B_1)\oplus B_1$ \\
\cline{2-4}
& $(1,0)$ & $A$
& $A_1=(A_1\oplus B_1)\oplus B_1$, then
  $A_2=(A_1\oplus A_2)\oplus A_1$ \\
\cline{2-4}
& $(1,1)$ & $B$
& Already has $B_1$ and $B_2$ \\
\hline

\multirow{4}{*}{$(1,1)$}
& $(0,0)$ & $B$
& $B_1=(A_1\oplus B_1)\oplus A_1$ \\
\cline{2-4}
& $(0,1)$ & $A$
& $A_1=(A_1\oplus B_1)\oplus B_1$ \\
\cline{2-4}
& $(1,0)$ & $B$
& Already has $B_1$ and $B_2$ \\
\cline{2-4}
& $(1,1)$ & $A$
& $A_1=(A_1\oplus B_1)\oplus B_1$, then
  $A_2=(A_1\oplus A_2)\oplus A_1$ \\
\hline

\end{tabular}
\end{center}

Thus, every user can recover its requested file for every demand vector
in $\mathcal{D}_{\mathcal{RS}}$. Therefore,
\[
    \left(M_0,M_1,R\right)
    =
    \left(
        \FA,
        \FA+\frac{\FB}{2},
        \frac{\FB}{2}
    \right)
\]
is achievable. The achievability of
\[
    V_4'
    =
    \left(
        \FA+\frac{\FB}{2},
        \FA,
        \frac{\FB}{2}
    \right)
\]
follows by swapping the two user stacks.

\subsection{Vertex $V_5$}

\par\medskip\noindent\textbf{Scheme at $V_5 = \bigl(\FA + \tfrac{1}{3}\FB,\; \FA + \tfrac{1}{3}\FB,\; \tfrac{1}{3}\FB\bigr)$:}

\step{File partition.}
Split $\Acore$ and $B$ into three equal subfiles of size $\FB/3$ each:
\[
    \Acore = (A_1, A_2, A_3), \qquad B = (B_1, B_2, B_3),
\]
with $A = (A_1, A_2, A_3, \Arem)$ and $|\Arem| = \FA - \FB$.

\step{Cache placement.}
All four users store $\Arem$ (of size $\FA - \FB$) in their caches. In
addition, each user $(k,j)$ caches the four core blocks $C_{k,j}$ listed
below:
\begin{center}
    \renewcommand{\arraystretch}{1.4}
    \begin{tabular}{|c|c|}
        \hline
        User $(k,j)$ & Cache content $C_{k,j}$ \\
        \hline
        $(0,0)$ & $A_2,\; A_1 \oplus A_3,\; B_1,\; B_3$ \\
        \hline
        $(0,1)$ & $A_1,\; A_3,\; B_2,\; B_1 \oplus B_3$ \\
        \hline
        $(1,0)$ & $A_3,\; A_1 \oplus A_2,\; B_1,\; B_2$ \\
        \hline
        $(1,1)$ & $A_1,\; A_2,\; B_3,\; B_1 \oplus B_2$ \\
        \hline
    \end{tabular}
\end{center}
With $\Arem$ and four blocks of size $\FB/3$ in every cache,
$M_0 = M_1 = (\FA - \FB) + \tfrac{4}{3}\FB = \FA + \tfrac{1}{3}\FB$.

\step{Delivery.}
As in the previous scheme, $\Arem$ is never transmitted; the server
broadcasts a single coded block:
\begin{center}
    \renewcommand{\arraystretch}{1.4}
    \begin{tabular}{|c|c|}
        \hline
        Demand pattern $\overline{c}$ & Transmission $T_{\overline{c}}$ \\
        \hline
        $(0,0)$ & $A_1 \oplus B_1$ \\
        \hline
        $(0,1)$ & $A_3 \oplus B_3$ \\
        \hline
        $(1,0)$ & $A_2 \oplus B_2$ \\
        \hline
        $(1,1)$ & $A_1 \oplus A_2 \oplus A_3 \oplus B_1 \oplus B_2 \oplus B_3$ \\
        \hline
    \end{tabular}
\end{center}
Every transmission has size $\FB/3$, yielding $R = \tfrac{1}{3}\FB$.

\step{Decodability.}
Consider $\overline{c} = (0,0)$, corresponding to
$\overline{d} = (A, B, A, B)$, where $T_{(0,0)} = A_1 \oplus B_1$:
\begin{itemize}
    \item \emph{User $(0,0)$ wants $A$:} Uses its cached $B_1$ to recover
          $A_1$ from $A_1 \oplus B_1$, then recovers $A_3$ from its
          cached $A_1 \oplus A_3$, and combines them with its cached
          $A_2$ and $\Arem$.
    \item \emph{User $(0,1)$ wants $B$:} Uses its cached $A_1$ to recover
          $B_1$ from $A_1 \oplus B_1$, then recovers $B_3$ from its
          cached $B_1 \oplus B_3$, and combines them with its cached
          $B_2$.
    \item \emph{User $(1,0)$ wants $A$:} Uses its cached $B_1$ to recover
          $A_1$ from $A_1 \oplus B_1$, then recovers $A_2$ from its
          cached $A_1 \oplus A_2$, and combines them with its cached
          $A_3$ and $\Arem$.
    \item \emph{User $(1,1)$ wants $B$:} Uses its cached $A_1$ to recover
          $B_1$ from $A_1 \oplus B_1$, then recovers $B_2$ from its
          cached $B_1 \oplus B_2$, and combines them with its cached
          $B_3$.
\end{itemize}
Decodability for $\overline{c} \in \{(0,1), (1,0), (1,1)\}$ follows in
the same way.

We verify the decodability for all four demand patterns in
$\mathcal{D}_{\mathcal{RS}}$. Recall that $\Arem$ is stored in the
cache of every user. Hence, a user requesting $A$ only needs to recover
the missing subfiles of $\Acore$. The decoding of the core parts is
summarized in the following table.

\begin{center}
\renewcommand{\arraystretch}{1.5}
\begin{tabular}{|c|c|c|p{8.2cm}|}
\hline
$\overline{c}$ & User & Demand & Decoding of the missing subfile(s) \\
\hline

\multirow{4}{*}{$(0,0)$}
& $(0,0)$ & $A$
& $A_1=T_{(0,0)}\oplus B_1$, then
  $A_3=(A_1\oplus A_3)\oplus A_1$ \\
\cline{2-4}
& $(0,1)$ & $B$
& $B_1=T_{(0,0)}\oplus A_1$, then
  $B_3=(B_1\oplus B_3)\oplus B_1$ \\
\cline{2-4}
& $(1,0)$ & $A$
& $A_1=T_{(0,0)}\oplus B_1$, then
  $A_2=(A_1\oplus A_2)\oplus A_1$ \\
\cline{2-4}
& $(1,1)$ & $B$
& $B_1=T_{(0,0)}\oplus A_1$, then
  $B_2=(B_1\oplus B_2)\oplus B_1$ \\
\hline

\multirow{4}{*}{$(0,1)$}
& $(0,0)$ & $A$
& $A_3=T_{(0,1)}\oplus B_3$, then
  $A_1=(A_1\oplus A_3)\oplus A_3$ \\
\cline{2-4}
& $(0,1)$ & $B$
& $B_3=T_{(0,1)}\oplus A_3$, then
  $B_1=(B_1\oplus B_3)\oplus B_3$ \\
\cline{2-4}
& $(1,0)$ & $B$
& $B_3=T_{(0,1)}\oplus A_3$ \\
\cline{2-4}
& $(1,1)$ & $A$
& $A_3=T_{(0,1)}\oplus B_3$ \\
\hline

\multirow{4}{*}{$(1,0)$}
& $(0,0)$ & $B$
& $B_2=T_{(1,0)}\oplus A_2$ \\
\cline{2-4}
& $(0,1)$ & $A$
& $A_2=T_{(1,0)}\oplus B_2$ \\
\cline{2-4}
& $(1,0)$ & $A$
& $A_2=T_{(1,0)}\oplus B_2$, then
  $A_1=(A_1\oplus A_2)\oplus A_2$ \\
\cline{2-4}
& $(1,1)$ & $B$
& $B_2=T_{(1,0)}\oplus A_2$, then
  $B_1=(B_1\oplus B_2)\oplus B_2$ \\
\hline

\multirow{4}{*}{$(1,1)$}
& $(0,0)$ & $B$
& $B_2=T_{(1,1)}\oplus A_2\oplus(A_1\oplus A_3)
  \oplus B_1\oplus B_3$ \\
\cline{2-4}
& $(0,1)$ & $A$
& $A_2=T_{(1,1)}\oplus A_1\oplus A_3
  \oplus B_2\oplus(B_1\oplus B_3)$ \\
\cline{2-4}
& $(1,0)$ & $B$
& $B_3=T_{(1,1)}\oplus A_3\oplus(A_1\oplus A_2)
  \oplus B_1\oplus B_2$ \\
\cline{2-4}
& $(1,1)$ & $A$
& $A_3=T_{(1,1)}\oplus A_1\oplus A_2
  \oplus B_3\oplus(B_1\oplus B_2)$ \\
\hline

\end{tabular}
\end{center}

Thus, every user can recover its requested core file for every demand
vector in $\mathcal{D}_{\mathcal{RS}}$. Since every user also stores
$\Arem$, every user requesting $A$ can recover the entire file $A$.
Therefore,
\[
    \left(M_0,M_1,R\right)
    =
    \left(
        \FA+\frac{\FB}{3},
        \FA+\frac{\FB}{3},
        \frac{\FB}{3}
    \right)
\]
is achievable.

\par\medskip\noindent\textbf{Trivial schemes:}

The remaining corner points lie on the outer boundary of the cache region
and are achieved either by uncoded delivery or by directly operating one
of the schemes above while ignoring the excess cache memory at the
larger-cache user:
\begin{itemize}
    \item \textbf{Point $V_1 = (0,\, 0,\, \FA + \FB)$:} Neither user has
          cache memory, and the server broadcasts both files $A$ and $B$
          uncoded.
    \item \textbf{Point $V_6 = (\FA + \FB,\, \FA + \FB,\, 0)$:} Both
          users cache both files $A$ and $B$ in full, requiring zero
          transmission ($R = 0$).
    \item \textbf{Points $V_7 = (0,\, \FA + \FB,\, \FA + \FB)$ and
          $V_7' = (\FA + \FB,\, 0,\, \FA + \FB)$:} One user has no cache
          while the other caches both files; the server broadcasts both
          files uncoded at rate $R = \FA + \FB$ so the zero-cache user
          can decode without revealing its demand.
    \item \textbf{Points $V_8 = (0.5\,\FB,\, \FA + \FB,\, \FA)$ and
          $V_8' = (\FA + \FB,\, 0.5\,\FB,\, \FA)$:} Achieved directly by
          the scheme at $V_3$ (resp.\ $V_3'$), where the user with cache
          size $\FA + \FB$ only utilizes $\FB$ of its memory (or simply
          stores both files in full).
    \item \textbf{Points $V_9 = (\FA,\, \FA + \FB,\, 0.5\,\FB)$ and
          $V_9' = (\FA + \FB,\, \FA,\, 0.5\,\FB)$:} Achieved directly by
          the scheme at $V_4$ (resp.\ $V_4'$), where the user with cache
          size $\FA + \FB$ only utilizes $\FA + 0.5\,\FB$ of its memory.
\end{itemize}
%===========================================================
\section{Proof of Complete Coverage and Disjoint Interiors of $\mathcal{R}_1, \ldots, \mathcal{R}_8$}
\label{app:partition_proof}
%===========================================================

In this appendix, we verify from the region definitions in \Cref{tab:dp_convex_hull} that the eight regions $\mathcal{R}_1, \ldots, \mathcal{R}_8$ cover the entire cache-size square $[0, \FA + \FB]^2$ and have pairwise disjoint interiors:
\begin{align}
    \bigcup_{i=1}^{8} \mathcal{R}_i &= [0, \FA + \FB]^2, \label{eq:app_union} \\
    \operatorname{int}(\mathcal{R}_i) \cap \operatorname{int}(\mathcal{R}_j) &= \emptyset, \qquad \forall i \neq j. \label{eq:app_disjoint}
\end{align}

First, we split the square $[0, \FA + \FB]^2$ along the main diagonal $M_1 = M_0$ into two triangles: the upper-left triangle $\mathcal{T}_{\mathrm{upper}}$ and the lower-right triangle $\mathcal{T}_{\mathrm{lower}}$, defined as
\begin{align}
    \mathcal{T}_{\mathrm{upper}} &\triangleq \bigl\{(M_0, M_1) \in [0, \FA + \FB]^2 : M_1 \ge M_0\bigr\}, \\
    \mathcal{T}_{\mathrm{lower}} &\triangleq \bigl\{(M_0, M_1) \in [0, \FA + \FB]^2 : M_0 \ge M_1\bigr\}.
\end{align}
Clearly, we have
\begin{equation}
    \mathcal{T}_{\mathrm{upper}} \cup \mathcal{T}_{\mathrm{lower}} = [0, \FA + \FB]^2,
\end{equation}
and their interiors are disjoint because $M_1 > M_0$ and $M_0 > M_1$ cannot hold simultaneously:
\begin{equation}
    \operatorname{int}(\mathcal{T}_{\mathrm{upper}}) \cap \operatorname{int}(\mathcal{T}_{\mathrm{lower}}) = \emptyset.
\end{equation}

From \Cref{tab:dp_convex_hull}, the four regions $\mathcal{R}_1, \mathcal{R}_3, \mathcal{R}_5, \mathcal{R}_7$ satisfy $M_1 \ge M_0$ and lie in $\mathcal{T}_{\mathrm{upper}}$, while the remaining four symmetric regions $\mathcal{R}_2, \mathcal{R}_4, \mathcal{R}_6, \mathcal{R}_8$ satisfy $M_0 \ge M_1$ and lie in $\mathcal{T}_{\mathrm{lower}}$. Therefore, any region in $\{\mathcal{R}_1, \mathcal{R}_3, \mathcal{R}_5, \mathcal{R}_7\}$ has a disjoint interior from any region in $\{\mathcal{R}_2, \mathcal{R}_4, \mathcal{R}_6, \mathcal{R}_8\}$. By symmetry, it suffices to show that $\mathcal{R}_1, \mathcal{R}_3, \mathcal{R}_5, \mathcal{R}_7$ cover $\mathcal{T}_{\mathrm{upper}}$ and have pairwise disjoint interiors.

%===========================================================
\section{Proof of Complete Coverage and Disjoint Interiors of $\mathcal{R}_1, \ldots, \mathcal{R}_8$}
\label{app:partition_proof}
%===========================================================

We verify that the regions $\mathcal{R}_1, \ldots, \mathcal{R}_8$ defined in \Cref{tab:dp_convex_hull} perfectly tile the cache-size domain $[0, \FA + \FB]^2$:
\begin{align}
    \bigcup_{i=1}^{8} \mathcal{R}_i &= [0, \FA + \FB]^2, \label{eq:app_union} \\
    \operatorname{int}(\mathcal{R}_i) \cap \operatorname{int}(\mathcal{R}_j) &= \emptyset, \qquad \forall i \neq j. \label{eq:app_disjoint}
\end{align}

%-----------------------------------------------------------
\subsection{Proof of Complete Coverage of $\mathcal{T}_{\mathrm{upper}}$}

Throughout this proof, we assume $\FA \ge \FB > 0$, as stipulated in the model.

For any point $(M_0,M_1)\in\mathcal{T}_{\mathrm{upper}}$, we have $M_0\in[0,\FA+\FB]$ and $M_1\in[M_0,\FA+\FB]$. We partition the range of $M_0$ into five intervals and verify that the entire vertical range $M_1\in[M_0,\FA+\FB]$ is covered by the regions defined in \Cref{tab:dp_convex_hull}.

\par\medskip\noindent\textbf{1) Interval $0\le M_0\le \tfrac{1}{3}\FB$:}
In this interval, $4M_0-\FB\le M_0$. Hence, every $M_1\ge M_0$ automatically satisfies $M_1\ge 4M_0-\FB$. Together with the remaining defining inequalities of $\mathcal{R}_1$, this implies:
\begin{equation}
    M_1\in[M_0,\FA+\FB] \implies (M_0,M_1)\in\mathcal{R}_1.
\end{equation}

\par\medskip\noindent\textbf{2) Interval $\tfrac{1}{3}\FB\le M_0\le 0.5\FB$:}
In this interval, $M_0\le 4M_0-\FB\le \FB\le\FA+\FB$. The boundary $M_1=4M_0-\FB$ therefore splits the vertical range into two portions. The lower portion satisfies the defining inequalities of $\mathcal{R}_3$, while the upper portion satisfies those of $\mathcal{R}_1$:
\begin{align}
    M_1\in[M_0,\,4M_0-\FB] &\implies (M_0,M_1)\in\mathcal{R}_3,\\
    M_1\in[4M_0-\FB,\,\FA+\FB] &\implies (M_0,M_1)\in\mathcal{R}_1.
\end{align}
Consequently, $[M_0,\,4M_0-\FB] \cup [4M_0-\FB,\,\FA+\FB] = [M_0,\FA+\FB]$.

\par\medskip\noindent\textbf{3) Interval $0.5\FB\le M_0\le\FA$:}
In this interval, $M_0\le M_0+0.5\FB\le\FA+\FB$. The boundary $M_1=M_0+0.5\FB$ splits the vertical range into two portions. The lower portion satisfies the defining inequalities of $\mathcal{R}_3$, while the upper portion satisfies those of $\mathcal{R}_5$:
\begin{align}
    M_1\in[M_0,\,M_0+0.5\FB] &\implies (M_0,M_1)\in\mathcal{R}_3,\\
    M_1\in[M_0+0.5\FB,\,\FA+\FB] &\implies (M_0,M_1)\in\mathcal{R}_5.
\end{align}
Consequently, $[M_0,\,M_0+0.5\FB] \cup [M_0+0.5\FB,\,\FA+\FB] = [M_0,\FA+\FB]$.

\par\medskip\noindent\textbf{4) Interval $\FA\le M_0\le\FA+\tfrac{1}{3}\FB$:}
In this interval, $M_0\le 1.5\FA+0.5\FB-0.5M_0 \le\FA+\FB$. The boundary $M_1=1.5\FA+0.5\FB-0.5M_0$ splits the vertical range into two portions. The lower portion satisfies the defining inequalities of $\mathcal{R}_3$, while the upper portion satisfies those of $\mathcal{R}_7$:
\begin{align}
    M_1\in[M_0,\, 1.5\FA+0.5\FB-0.5M_0] &\implies (M_0,M_1)\in\mathcal{R}_3,\\
    M_1\in[1.5\FA+0.5\FB-0.5M_0,\,\FA+\FB] &\implies (M_0,M_1)\in\mathcal{R}_7.
\end{align}
Consequently, $[M_0,\, 1.5\FA+0.5\FB-0.5M_0] \cup [1.5\FA+0.5\FB-0.5M_0,\,\FA+\FB] = [M_0,\FA+\FB]$.

\par\medskip\noindent\textbf{5) Interval $\FA+\tfrac{1}{3}\FB\le M_0\le\FA+\FB$:}
In this interval, $1.5\FA+0.5\FB-0.5M_0 \le M_0$. Thus, every $M_1\ge M_0$ automatically satisfies $M_1\ge 1.5\FA+0.5\FB-0.5M_0$. Together with the remaining defining inequalities of $\mathcal{R}_7$, this gives:
\begin{equation}
    M_1\in[M_0,\FA+\FB] \implies (M_0,M_1)\in\mathcal{R}_7.
\end{equation}

Combining the five intervals yields $\mathcal{R}_1\cup\mathcal{R}_3\cup\mathcal{R}_5\cup\mathcal{R}_7 =\mathcal{T}_{\mathrm{upper}}$. By symmetry, $\mathcal{R}_2\cup\mathcal{R}_4\cup\mathcal{R}_6\cup\mathcal{R}_8 =\mathcal{T}_{\mathrm{lower}}$. This proves~\eqref{eq:app_union}.

%-----------------------------------------------------------
\subsection{Proof of Pairwise Disjoint Interiors}

The interiors of the regions are characterized by their defining inequalities made strict. We first establish pairwise disjointness among $\mathcal{R}_1,\mathcal{R}_3,\mathcal{R}_5,\mathcal{R}_7$.

\par\medskip\noindent\textbf{1) $\operatorname{int}(\mathcal{R}_1) \cap\operatorname{int}(\mathcal{R}_3)=\emptyset$:}
The separating boundary is $M_1=4M_0-\FB$, since
\begin{equation}
    \operatorname{int}(\mathcal{R}_1) \implies M_1>4M_0-\FB, \quad \text{and} \quad \operatorname{int}(\mathcal{R}_3) \implies M_1<4M_0-\FB.
\end{equation}

\par\medskip\noindent\textbf{2) $\operatorname{int}(\mathcal{R}_5) \cap\operatorname{int}(\mathcal{R}_3)=\emptyset$:}
The separating boundary is $M_1=M_0+0.5\FB$, since
\begin{equation}
    \operatorname{int}(\mathcal{R}_5) \implies M_1>M_0+0.5\FB, \quad \text{and} \quad \operatorname{int}(\mathcal{R}_3) \implies M_1<M_0+0.5\FB.
\end{equation}

\par\medskip\noindent\textbf{3) $\operatorname{int}(\mathcal{R}_7) \cap\operatorname{int}(\mathcal{R}_3)=\emptyset$:}
The separating boundary is $M_1=1.5\FA+0.5\FB-0.5M_0$, since
\begin{equation}
\begin{aligned}
    \operatorname{int}(\mathcal{R}_7) &\implies M_1>1.5\FA+0.5\FB-0.5M_0,\\
    \operatorname{int}(\mathcal{R}_3) &\implies M_1<1.5\FA+0.5\FB-0.5M_0.
\end{aligned}
\end{equation}

\par\medskip\noindent\textbf{4) $\operatorname{int}(\mathcal{R}_1) \cap\operatorname{int}(\mathcal{R}_5)=\emptyset$:}
The separating boundary is $M_0=0.5\FB$, since
\begin{equation}
    \operatorname{int}(\mathcal{R}_1) \implies M_0<0.5\FB, \quad \text{and} \quad \operatorname{int}(\mathcal{R}_5) \implies M_0>0.5\FB.
\end{equation}

\par\medskip\noindent\textbf{5) $\operatorname{int}(\mathcal{R}_5) \cap\operatorname{int}(\mathcal{R}_7)=\emptyset$:}
The separating boundary is $M_0=\FA$, since
\begin{equation}
    \operatorname{int}(\mathcal{R}_5) \implies M_0<\FA, \quad \text{and} \quad \operatorname{int}(\mathcal{R}_7) \implies M_0>\FA.
\end{equation}

\par\medskip\noindent\textbf{6) $\operatorname{int}(\mathcal{R}_1) \cap\operatorname{int}(\mathcal{R}_7)=\emptyset$:}
Since $\FA\ge\FB>0$, we have $0.5\FB<\FA$. Consequently,
\begin{equation}
    \operatorname{int}(\mathcal{R}_1) \implies M_0<0.5\FB, \quad \text{and} \quad \operatorname{int}(\mathcal{R}_7) \implies M_0>\FA,
\end{equation}
which makes the two interiors disjoint.

By symmetry, the interiors of $\mathcal{R}_2,\mathcal{R}_4,\mathcal{R}_6,\mathcal{R}_8$ are also pairwise disjoint. Finally, every upper region has its interior contained in $\{(M_0,M_1):M_1>M_0\}$, whereas every lower region has its interior contained in $\{(M_0,M_1):M_0>M_1\}$. Therefore, for every $i\in\{1,3,5,7\}$ and $j\in\{2,4,6,8\}$, $\operatorname{int}(\mathcal{R}_i) \cap\operatorname{int}(\mathcal{R}_j) =\emptyset$. Combining the within-group and cross-group results proves pairwise disjointness of the interiors of all eight regions, establishing~\eqref{eq:app_disjoint}.

\bibliography{Bibliography.bib}
\bibliographystyle{IEEEtran}

\end{document}